\documentclass[%
 preprint, %linenumbers,
%superscriptaddress,
%groupedaddress,
%unsortedaddress,
%runinaddress,
%frontmatterverbose, 
%preprint,
%preprintnumbers,
%nofootinbib,
%nobibnotes,
%bibnotes,
 amsmath,amssymb,
 aps, physrev,
pra,
%prb,
%rmp,
%prstab,
%prstper,
%floatfix,
]{revtex4-2}

%\documentclass[aps,prl,twocolumn,nofootinbib, superscriptaddress,floatfix,showpacs,longbibliography]{revtex4-1} %reprint,,groupedaddress

%\documentclass[aps,superscriptaddress,twocolumn,twoside,floatfix,prl,a4paper]{revtex4-2}

%-----------------------------------------------------------------------------%
% Fonts and symbols:
%-----------------------------------------------------------------------------%

\usepackage{mathpazo}
\usepackage{amsfonts}
\usepackage{amsmath}
\usepackage{bm}
\usepackage{graphicx}
\usepackage{latexsym}

\interfootnotelinepenalty=10000% do not allow footnotes to split across pages

%-----------------------------------------------------------------------------%
% Margins and page layout:
%-----------------------------------------------------------------------------%

\usepackage[margin=1.05in]{geometry}

\usepackage[T1]{fontenc}
\usepackage{times}
\usepackage{color,graphicx}
\usepackage{array}
\usepackage{enumerate}
\usepackage{amsmath}
\usepackage{amssymb}
\usepackage{amsthm}
\usepackage{bbm} % For blackboard numerals (\mathbbm{1} in particular)
\usepackage{pgfplots}
\usepackage{pgf}
\usepackage{graphicx}
\usepackage{tikz}
\usetikzlibrary{patterns}
\usetikzlibrary{arrows.meta}
\usepgfplotslibrary{patchplots} % LATEX and plain TEX
\usetikzlibrary{pgfplots.patchplots} % LATEX and plain TEX
\pgfplotsset{width=9cm,compat=1.5.1}
\usepackage{diagbox}

\definecolor{myblue}{RGB}{0,0,128}
\definecolor{myblue2}{RGB}{0,32,96}
\definecolor{myblue3}{RGB}{0,64,64}
\definecolor{myblue4}{RGB}{0,96,32}
\definecolor{myblue5}{RGB}{0,128,0}
\definecolor{myblue6}{RGB}{32,96,0}
\definecolor{myblue7}{RGB}{64,64,0}
\definecolor{myblue8}{RGB}{96,32,0}
\definecolor{myblue9}{RGB}{128,0,0}

%-----------------------------------------------------------------------------%
% Theorem-like environments:
%-----------------------------------------------------------------------------%

\usepackage{amsthm}

\newtheorem{theorem}{Theorem}%[section]

\newtheorem{lemma}[theorem]{Lemma}

%-----------------------------------------------------------------------------%
% Other packages:
%-----------------------------------------------------------------------------%

\usepackage{xcolor}
\usepackage{makeidx}
\usepackage[colorlinks=true,linkcolor=blue,anchorcolor=blue,citecolor=red,urlcolor=magenta]{hyperref}
\usepackage{caption}
\usepackage{subcaption}
%\raggedbottom

%-----------------------------------------------------------------------------%
% Macros:
%-----------------------------------------------------------------------------%
\definecolor{ballblue}{rgb}{0.13, 0.67, 0.8}

\newcommand{\tr}{\operatorname{Tr}}

\newcommand{\ket}[1]{| #1 \rangle}

\newcommand\supp{\mathop{\rm supp}\nolimits}

\usepackage{scalerel}

\def\I{\mathbb{1}} 
 
\def\C{\mathbb{C}}

\def\K{\mathcal{K}}

\def\E{\mathcal{E}}

\def\V{\mathcal{V}}
\def\W{\mathcal{W}}
\def\X{\mathcal{X}}

\def\l{\ell}

\def\0{\mathbf{0}}

\def\spn{\text{span}}
\def\supp{\text{supp}}
\def\ker{\operatorname{ker}}
\def\ima{\operatorname{Im}}

\def\Pos{\operatorname{Pos}}

\begin{document}

\title{Optimal discrimination of quantum sequences}

\author{Tathagata Gupta}
\affiliation{Physics and Applied Mathematics Unit, Indian Statistical Institute, 203 B. T. Road, Kolkata 700108, India}
\email{tathagatagupta@gmail.com} 

\author{Shayeef Murshid}
\affiliation{Electronics and Communication Sciences Unit, Indian Statistical Institute, 203 B. T. Road, Kolkata 700108, India}
\email{shayeef.murshid91@gmail.com} 

\author{Vincent Russo}
\affiliation{Unitary Fund, 315 Montogomery St, Fl 10 San Francisco, California 94104, USA} 
\email{vincent@unitary.fund}

\author{Somshubhro Bandyopadhyay}
\affiliation{Department of Physical Sciences, Bose Institute, EN 80, Bidhannagar, Kolkata 700091, India}
\email{som@jcbose.ac.in}

%\date{\today}

\begin{abstract}
A key concept of quantum information theory is that accessing information encoded in a quantum system requires us to discriminate between several possible states the system could be in. A natural generalization of this problem, namely, quantum sequence discrimination, appears in various quantum information processing tasks, the objective being to determine the state of a finite sequence of quantum states. Since such a sequence is a composite quantum system, the fundamental question is whether an optimal measurement is local, i.e., comprising measurements on the individual members, or collective, i.e. requiring joint measurement(s). In some known instances of this problem, the optimal measurement is local, whereas in others, it is collective. But, so far, a definite prescription based solely on the problem description has been lacking. In this paper, we prove that if the members of a given sequence are secretly and independently drawn from an ensemble or even from different ensembles, the optimum success probability is achievable by fixed local measurements on the individual members of the sequence, and no collective measurement is necessary. This holds for both minimum-error and unambiguous state discrimination paradigms. 

\end{abstract}

\maketitle

%-----------------------------------------------------%
\section{Introduction}
One of the characteristic features of quantum theory is that composite quantum systems can possess nonlocal properties. This is often associated with entangled systems violating a Bell inequality \cite{bell1966problem,brunner2014bell}. However, an unentangled system, whose parts may not have interacted in the past, can also exhibit nonlocality, conceptually different from the Bell type \cite{peres1991optimal, bennett1999quantum, wootters2006distinguishing,chitambar2013revisiting,halder2019strong}. Specifically, when determining the state of an unentangled system, known to be in one of several possible states, a joint measurement of the whole system is sometimes necessary. Thus a fundamental question in quantum information is when does optimal extraction of information from a composite system demands access to the whole and when does it not. 

This question frequently arises in many quantum information theoretic protocols of practical importance, such as quantum change-point detection \cite{sentis2016quantum,sentis2017exact,mohan2023generalized}, multiple-copy state discrimination \cite{higgins2011multiple}, and quantum key distribution \cite{cerutti2024optimal}. In all these scenarios, essentially, the objective is to determine the state of an unknown quantum sequence $\left(\rho_{1},\dots,\rho_{k}\right)$ whose members $\rho_{i}$ are drawn from ensembles of which we have complete knowledge. For a given sequence of finite length, this boils down to the problem of sequence discrimination, a natural generalization of the well-studied quantum state discrimination problem \cite{bae2015quantum,barnett2009quantum, chefles2000quantum, bergou200411}, where one aims to discriminate between the possible states a quantum system could be in. Since the state of a sequence $\left(\rho_{1},\dots,\rho_{k}\right)$ is of the form $\rho_{1}\otimes\cdots\otimes\rho_{k}$, the question is whether the optimal measurement, i.e., optimal according to some well-defined figure of merit, is local, i.e., comprising measurements on the individual members, or collective, i.e., requiring joint measurement(s). Moreover, even when local access is sufficient, it is necessary to know whether coordinated or adaptive strategies provide any advantage over fixed strategies that do not involve adaptation based on the outcomes of already performed measurements. In fact, we know of instances where the optimal measurement is local and fixed \cite{gupta2024unambiguous}, local and adaptive \cite{acin2005multiple}, and collective \cite{calsamiglia2010local}; in particular, in the quantum change-point problem, it is collective \cite{sentis2016quantum}, whereas in multiple-copy discrimination, depending on the problem specification, it could be either of the two \cite{acin2005multiple,calsamiglia2010local}.

Unfortunately, given a sequence discrimination problem, we do not have any definite characterization or prescription that could, at the very least, tell us about the nature of the optimal measurement. As it is, discriminating between states of a composite quantum system is known to be challenging \cite{wootters2006distinguishing, chitambar2013revisiting, halder2019strong} and for sequences more so as they could be of any finite length and also come in different varieties: they could be independent and identically distributed (i.i.d.), where the members of a sequence are independent (i.e., selection of an individual member does not depend on what states have already appeared in the earlier positions) and identically distributed (i.e., the members are drawn from the same ensemble), or non-i.i.d., where members are either drawn independently but from different ensembles or not independently at all.

In this paper, we take a significant step towards addressing whether an optimal measurement is local or collective only from the problem description. We prove that if the members of a sequence are secretly and independently selected either from the same ensemble or even from different ensembles, the optimal measurement is local; in particular, it constitutes fixed measurements on the individual elements. The result holds for both minimum-error and unambiguous state discrimination paradigms. For the latter, however, additional assumptions are required for a nontrivial result since unambiguous discrimination applies to states that satisfy certain conditions \cite{chefles1998unambiguous,feng2004unambiguous}. 

Note that as long as the members of a given sequence are selected independently, \textit{irrespective of the probability distribution associated with the selection being identical or different}, the optimal measurement to determine its state is always local and fixed. Therefore, if the independence condition does not hold, all types of measurements mentioned earlier, are possible. Future works, therefore, only need to address the optimality question in the non-independent scenario. 

\section{Problem statement and main result}
Suppose we are given an unknown sequence of length $k\in\mathbb{N}$ and wish to determine its state as well as possible. The members of the given sequence are secretly and independently drawn from potentially different but known ensembles. The ensemble from which the $i$th member is drawn is labeled as $\mathcal{E}^{i}$, where \begin{equation}
\mathcal{E}^{i}=\left\{ \left(\eta_{j}^{i},\rho_{j}^{i}\right):j=1,\dots,\l_{i}\right\}\end{equation} with $\rho_{j}^{i}$ being density operators on $\mathbb{C}^{d_{i}}$ for $d_{i}\geqslant2$. Thus, for a sequence of length $k$, we have $k$ such ensembles $\mathcal{E}^{1},\dots,\mathcal{E}^{k}$. In this way, we can account for all possibilities: (i) every member is drawn from a different ensemble; some are drawn from the same ensemble, in which case, though the labeling of ensembles is different, they are, in fact, identical; and (iii) every member is drawn from the same ensemble, so here no indexing of ensembles would be necessary. Note that the last one corresponds to i.i.d. sequences, whereas every other case corresponds to independent but not identically distributed sequences.

Thus, given a sequence of length $k$, we only know that the $i$th member is drawn from $\mathcal{E}^{i}$ and $\eta_{j}^{i}$ is the prior probability of it being $\rho_{j}^{i}$. With this, the ensemble of all possible sequences of length $k$ is given by \begin{equation} \label{eq:seqens}
\mathcal{E}_{k}=\left\{ \left(\eta_{x_{1}}^{1}\cdots \eta_{x_{k}}^{k},\rho_{x_{1}}^{1}\otimes\cdots\otimes\rho_{x_{k}}^{k}\right)\right\},\end{equation} where each $x_{i}\in\left[\l_{i}\right]$ for all $i\in\left[k\right]$. A simple counting argument shows that $\mathcal{E}_{k}$ contains $\l=\prod_{i=1}^{k}\l_{i}$ sequences. The objective is to determine whether the optimal measurement discriminating between the elements of $\mathcal{E}_{k}$ is collective or local. 

The optimality of a measurement subject to a given set of states depends on the choice of measurement strategy. Here, we consider both minimum-error and unambiguous state discrimination paradigms. The former minimizes the average error and applies to any set of states. The corresponding measure is the success probability, the maximum probability that the unknown state is correctly determined \cite{holevo1973statistical,helstrom1969quantum}. The latter \cite{ivanovic1987differentiate,dieks1988overlap,peres1988differentiate}, however, applies to sets of states that satisfy a specific condition, and if this condition is met, this approach correctly determines the unknown state with a nonzero probability; for example, a set of pure states can be unambiguously discriminated if and only if they are linearly independent \cite{chefles1998unambiguous}. Closed-form solutions for both are known in the two-state case \cite{helstrom1969quantum,jaeger1995optimal} and in specific instances where the states satisfy certain symmetry properties \cite{sun2001optimum}. For generic ensembles, finding solutions is generally hard; however, semidefinite programs exist \cite{eldar2003designing,eldar2003semidefinite}, and the optimum success probability can be obtained as the output of such programs. 

The main result of this paper is the following theorem.
\begin{theorem}\label{thm:main}
Let $p\left(\mathcal{E}\right)$ denote the optimum probability for minimum-error or unambiguous discrimination between the elements of an ensemble $\mathcal{E}$. Then, \begin{equation}\label{eq:main} p\left(\mathcal{E}_{k}\right) =\prod_{i=1}^{k}p\left(\mathcal{E}^{i}\right). \end{equation} 
\end{theorem}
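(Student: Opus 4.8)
The plan is to prove the two inequalities $p(\mathcal{E}_k) \geqslant \prod_i p(\mathcal{E}^i)$ and $p(\mathcal{E}_k) \leqslant \prod_i p(\mathcal{E}^i)$ separately, since together they give the stated equality. By induction it suffices to treat the case $k=2$; the general case follows by grouping the first $k-1$ ensembles into a single composite ensemble and applying the two-factor result, provided the two-factor argument does not secretly use that the factors are single systems (it will not).

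The easy direction is $p(\mathcal{E}_k) \geqslant \prod_i p(\mathcal{E}^i)$. Here I would explicitly construct a product strategy: take the optimal measurement (a POVM for minimum-error, or an unambiguous measurement for the UD case) on each individual ensemble $\mathcal{E}^i$ achieving $p(\mathcal{E}^i)$, and form the tensor product of these measurements. For minimum-error, the local measurement on factor $i$ has operators $\{M^i_{x_i}\}$ guessing state $x_i$, and the product POVM $\{M^1_{x_1}\otimes\cdots\otimes M^k_{x_k}\}$ identifies the full sequence; because the true state is a product $\rho^1_{x_1}\otimes\cdots\otimes\rho^k_{x_k}$ and the prior factorizes as $\eta^1_{x_1}\cdots\eta^k_{x_k}$, the overall success probability factorizes into $\prod_i \tr(M^i_{x_i}\rho^i_{x_i})$ summed against the product prior, which equals $\prod_i p(\mathcal{E}^i)$. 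The unambiguous case is analogous: the product of unambiguous measurements succeeds exactly when every factor succeeds, and the inconclusive operators combine so that a correct conclusive identification of the whole sequence occurs with probability $\prod_i p(\mathcal{E}^i)$. This direction is essentially a calculation and carries no real obstruction, beyond checking that the constructed operators form a valid POVM (they do, since tensor products of POVMs are POVMs).

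The hard direction is the upper bound $p(\mathcal{E}_k) \leqslant \prod_i p(\mathcal{E}^i)$, which says that even the best \emph{collective} measurement cannot beat the product strategy. My plan is to use semidefinite programming duality, which the excerpt already invokes as the computational framework. For minimum-error discrimination the optimal success probability is an SDP whose dual is a minimization; I would exhibit a feasible dual solution for $\mathcal{E}_k$ built as a tensor product of optimal dual solutions for each $\mathcal{E}^i$. Concretely, if $K_i$ is an optimal dual variable for ensemble $\mathcal{E}^i$ satisfying the dual constraints $K_i \geqslant \eta^i_j \rho^i_j$ for all $j$ with dual value $\tr(K_i) = p(\mathcal{E}^i)$, then I would argue that $K = K_1 \otimes \cdots \otimes K_k$ is dual-feasible for $\mathcal{E}_k$: one must verify $K \geqslant \eta^1_{x_1}\cdots\eta^k_{x_k}\,\rho^1_{x_1}\otimes\cdots\otimes\rho^k_{x_k}$ for every index tuple, and then $\tr(K) = \prod_i \tr(K_i) = \prod_i p(\mathcal{E}^i)$ gives the upper bound by weak duality. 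The crux is the operator inequality: from $K_i \geqslant \eta^i_{x_i}\rho^i_{x_i} \geqslant 0$ for each factor, I need the tensor product to dominate the tensor product of the right-hand sides. This requires the elementary but essential fact that if $0 \leqslant A_i \leqslant B_i$ for positive operators, then $\bigotimes_i A_i \leqslant \bigotimes_i B_i$, proved by telescoping one factor at a time using that $A \leqslant B$ and $C \geqslant 0$ imply $A\otimes C \leqslant B \otimes C$ (tensoring with a positive operator preserves the Loewner order). This monotonicity step is the main technical obstacle, and I expect it to be where the independence/product structure is genuinely used.

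For the unambiguous discrimination case I would run the same dual-feasibility construction against the SDP characterizing optimal UD (with the additional conclusive/inconclusive structure and the linear-independence-type feasibility conditions the excerpt flags), again tensoring optimal dual certificates and invoking the same Loewner-order monotonicity of tensor products to verify feasibility, with the dual objective multiplying to $\prod_i p(\mathcal{E}^i)$. Once both inequalities are in hand, the tensor product of optimal local measurements is shown to be a \emph{fixed} product strategy achieving the collective optimum, which establishes the theorem and simultaneously yields the physical conclusion that no collective or adaptive measurement outperforms fixed local measurements.
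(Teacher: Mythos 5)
Your minimum-error argument is correct and is, in substance, the paper's own proof in dual form: the paper verifies the Holevo--Yuen--Kennedy--Lax optimality conditions for the product measurement, and your dual certificate $K_i$ is exactly the operator $\sum_{x}\eta^i_x\rho^i_x M^i_x$ appearing in those conditions, with both routes resting on the same tensor-monotonicity fact (the paper's Lemma 3, your telescoping step). The achievability direction and the reduction to $k=2$ are also fine.

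The genuine gap is in the unambiguous case, where "run the same dual-feasibility construction" skips the step that carries most of the weight. The UD SDP (Eldar--Stojnic--Hassibi, the paper's Eq.~(16)) is not structurally parallel to minimum-error: its constraints are conjugations by isometries $\Theta(x_1,\ldots,x_k)$ whose columns span $\widetilde{S}(x_1,\ldots,x_k)=\bigcap_{(y_1,\ldots,y_k)\neq(x_1,\ldots,x_k)}\ker\bigl(\rho^1_{y_1}\otimes\cdots\otimes\rho^k_{y_k}\bigr)$, and Loewner monotonicity alone says nothing about how these subspaces relate to the local subspaces $\widetilde{S^i_{x_i}}$. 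For $Z=Z_1\otimes\cdots\otimes Z_k$ to be dual feasible you must verify the constraint on all of $\widetilde{S}(x_1,\ldots,x_k)$, not merely on $\widetilde{S^1_{x_1}}\otimes\cdots\otimes\widetilde{S^k_{x_k}}$. The inclusion $\widetilde{S^1_{x_1}}\otimes\cdots\otimes\widetilde{S^k_{x_k}}\subseteq\widetilde{S}(x_1,\ldots,x_k)$ is easy; the needed reverse inclusion --- that the global kernel intersection is no larger than the tensor product of the local ones --- is the substantive technical content, which the paper isolates as Theorem~\ref{thm:subspace-decomp} in the Appendix and proves via orthogonal complements, the identity $\ker(\varphi_1\otimes\cdots\otimes\varphi_k)=\sum_i\mathcal{V}_1\otimes\cdots\otimes\ker(\varphi_i)\otimes\cdots\otimes\mathcal{V}_k$, and a without-loss-of-generality assumption that each local ensemble's supports span the local space. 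Without this factorization your upper bound does not close: a collective strategy could a priori exploit extra room in $\widetilde{S}(x_1,\ldots,x_k)$, and the product dual certificate would not certify anything about it. A smaller omission: the paper separately establishes (its Theorem~\ref{thm:ud-eqv}, via Lemmas~\ref{lem:mixed-ud}--\ref{lem:ud-cond}) exactly when the sequence ensemble is unambiguously discriminable at all, which your proposal only gestures at; you would need at least the decomposition lemma above, from which both that characterization and your dual-feasibility step follow.
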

Thus the optimal measurement for discriminating between the elements of $\mathcal{E}_{k}$ is local and comprises of individual measurements on the members of the sequence. In particular, for the $i$th member, the local measurement corresponds to the optimal measurement that discriminates between the elements of the ensemble $\mathcal{E}^{i}$. 

Since minimum-error discrimination is possible for any set of states, one would always have a nonzero success probability. Specifically for sequences, this means that for any set of ensembles $\{\E^i\}$, we have $p(\E^i)>0$ for all $i$ and as a result $p(\E_k)>0$ in Eq. \eqref{eq:main}. However, for unambiguous discrimination this will not be the case in general. Therefore, our result shows that for unambiguous discrimination of a set of quantum sequences, a nontrivial result, i.e., $p\left(\mathcal{E}_{k}\right)>0$, is obtained if and only if for each ensemble $\mathcal{E}^{i}, p(\E^i)>0$, that is, the corresponding ensembles can be unambiguously discriminated.

It is useful to compare our main result with a couple of previous ones obtained under additional assumptions. The authors in \cite{wallden2014} studied minimum-cost measurements for quantum states, which is a generalization of minimum-error discrimination. The average cost of the measurement $M=\{M_i\}_{i=1}^N$ for the ensemble of states $\{(q_i,\sigma_i)\}_i, \; i \in [N]$ with respect to the cost matrix $C=[C_{ij}]$ is denoted $\tilde{C}(M)$ and defined as \begin{equation}
\tilde{C}(M)=\sum_{ij}q_iC_{ij}\tr(\sigma_iM_j).\end{equation} The minimum cost is computed by taking the optimal measurement that minimizes this average \begin{equation}
\tilde{C}_{\mathrm{min}}=\min_{\{M\}}\tilde{C}(M).\end{equation} Minimum-error discrimination corresponds to the case where $C_{ij}=1-\delta_{ij}.$ For a sequence of quantum states, there is a local minimum-cost problem for each component state as well as a global minimum-cost problem associated with the sequence as a whole. The authors in \cite{wallden2014} considered sequences of i.i.d. quantum states where the local cost matrices are identical for the different component states. They proved that under the assumption that the global cost matrix is a linear function of the local cost matrices, the minimum-cost measurement for the sequence is given by the tensor product of the local minimum-cost measurements. In this paper, in contrast, we show that neither the assumption of identical distribution for the component states nor an assumption on the relationship between local and global cost matrices is necessary for proving the optimality of local measurements; the mere independence of the component states is sufficient to guarantee that.

The optimality of fixed local measurements for unambiguous discrimination of quantum sequences were proved in \cite{gupta2024unambiguous} with a number of additional assumptions. There, the states of the sequence were drawn with equal probability from a set of linearly independent pure quantum states whose pairwise inner products were real and equal. Thus the component states of the sequences were identically distributed and also satisfied highly symmetric conditions due to the constraints on their prior probabilities and mutual overlaps. In this paper, we show that one does not need these additional assumptions for the optimality of local measurements; their independence is sufficient to guarantee that.

In the following sections, we prove Theorem \ref{thm:main} for the minimum-error and unambiguous paradigm.

\section{Minimum-error discrimination of quantum sequences}
We need the following two results to prove Theorem \ref{thm:main} for this case. The first one is the Holevo–Yuen–Kennedy–Lax Theorem \cite{yuen1975optimum,watrous2018theory}, which provides a necessary and sufficient condition for the optimality of a minimum-error discrimination measurement for a given ensemble (note that the optimal measurement is not unique \cite{bae2013structure}). We only state the theorem here; the proof can be found in \cite{watrous2018theory}.
\begin{theorem}[Holevo–Yuen–Kennedy–Lax~\cite{yuen1975optimum,watrous2018theory}]\label{thm:me-opt}
    Let $N$ be a positive integer and $\X$ be a complex Euclidean space of finite dimension. Given an ensemble $\E=\{(q_i,\sigma_i):i\in[N]\}$ of density operators on $\X$, a measurement $\{M_1, \ldots, M_N\} $ is optimal for minimum-error discrimination of the elements of $\E$ if and only if 
    \begin{equation} \label{opt-cond}
        \sum_{i=1}^N q_i \sigma_i M_i \succeq q_j \sigma_j
    \end{equation} 
    for all $j \in [N]$.
\end{theorem}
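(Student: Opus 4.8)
The plan is to recognize minimum-error discrimination as a semidefinite program and to extract the stated condition from its duality theory. The primal problem is to maximize the success probability $\sum_{i=1}^N q_i \tr(\sigma_i M_i)$, a linear functional, over the set of measurements $\{M_1,\dots,M_N\}$ obeying $M_i \succeq 0$ and $\sum_i M_i = \I$. This feasible set is convex and compact (each $M_i$ satisfies $0 \preceq M_i \preceq \I$), so the supremum is attained; call its value $\opt$. Attaching a Hermitian multiplier $Y$ to the equality constraint and writing the Lagrangian as $\tr(Y) + \sum_i \tr\big((q_i\sigma_i - Y)M_i\big)$, the requirement that the supremum over $M_i \succeq 0$ be finite forces $Y \succeq q_i\sigma_i$ for every $i$, giving the dual program: minimize $\tr(Y)$ over Hermitian $Y$ subject to $Y \succeq q_i\sigma_i$ for all $i\in[N]$.

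Weak duality is then the elementary identity, valid for any feasible primal-dual pair,
\begin{equation}
\tr(Y) - \sum_{i=1}^N q_i \tr(\sigma_i M_i) = \sum_{i=1}^N \tr\big((Y - q_i\sigma_i)M_i\big) \geq 0,
\end{equation}
where nonnegativity follows because each summand is a trace of a product of two positive semidefinite operators. The sufficiency direction is now immediate. Suppose $\{M_i\}$ satisfies $\sum_i q_i\sigma_i M_i \succeq q_j\sigma_j$ for all $j$; note that this hypothesis already forces $Y := \sum_i q_i\sigma_i M_i$ to be Hermitian and dual-feasible. Since $\tr(Y) = \sum_i q_i\tr(\sigma_i M_i)$, this measurement's success probability equals the dual value $\tr(Y)$. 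Applying weak duality to any competing measurement together with this same $Y$ shows that no measurement can achieve a larger success probability, so $\{M_i\}$ is optimal.

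For necessity I would invoke strong duality. The primal is strictly feasible (take $M_i = \I/N \succ 0$), so Slater's condition holds and the dual optimum is attained at some Hermitian $Y^\star$ with $\tr(Y^\star) = \opt$. Given any optimal measurement $\{M_i^\star\}$, the gap in the displayed identity vanishes, forcing each summand $\tr\big((Y^\star - q_i\sigma_i)M_i^\star\big) = 0$; as both factors are positive semidefinite, this yields $(Y^\star - q_i\sigma_i)M_i^\star = 0$ for every $i$. Summing these identities and using $\sum_i M_i^\star = \I$ gives $Y^\star = \sum_i q_i\sigma_i M_i^\star$, whereupon the dual-feasibility constraints $Y^\star \succeq q_j\sigma_j$ become precisely the asserted condition in Eq.~\eqref{opt-cond}.

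I expect the main obstacle to lie in the necessity direction: establishing strong duality with dual attainment (justifying Slater's condition and the compactness needed so that the complementary-slackness step is rigorous), and, crucially, recognizing that the optimal dual variable coincides with $\sum_i q_i\sigma_i M_i^\star$. The latter encodes the nontrivial fact that, although $\sum_i q_i\sigma_i M_i$ need not be Hermitian for a generic measurement, it \emph{becomes} Hermitian at optimality, which is exactly what makes the positive-semidefinite ordering in Eq.~\eqref{opt-cond} meaningful. By contrast, the sufficiency direction is routine, requiring only weak duality. An alternative to the duality route would be a direct variational argument that perturbs an optimal measurement along feasible directions, but that approach demands more care to simultaneously respect the positivity constraints $M_i \succeq 0$ and to extract the Hermiticity of $\sum_i q_i\sigma_i M_i^\star$, so I would favor the semidefinite-programming formulation.
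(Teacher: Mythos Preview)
The paper does not actually prove this theorem; immediately before the statement it says ``We only state the theorem here; the proof can be found in \cite{watrous2018theory}.'' Your SDP-duality argument is correct and is essentially the standard route taken in that reference: weak duality dispatches sufficiency, and strong duality together with complementary slackness forces the optimal dual variable to equal $\sum_i q_i\sigma_i M_i^\star$, from which necessity follows. Your observation that Hermiticity of $\sum_i q_i\sigma_i M_i$ is \emph{implied} by the hypothesis in the sufficiency direction (since $A\succeq B$ with $B$ Hermitian forces $A$ Hermitian) is the right way to make the argument self-contained.
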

The second result we need is the following lemma.
\begin{lemma}\label{lem:pos-matrices-eigs}
    Let $A, B, C,$ and $D$ be positive semidefinite operators such that $A \succeq C$ and $B \succeq D$. Then, $A \otimes B \succeq C \otimes D$.
\end{lemma}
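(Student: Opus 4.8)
The plan is to reduce the statement to two elementary facts about the Löwner order: that the tensor product of two positive semidefinite operators is positive semidefinite, and that a sum of positive semidefinite operators is positive semidefinite. The difference $A \otimes B - C \otimes D$ does not factor in any obvious way, so the first step is to insert an intermediate term and telescope,
\[
A \otimes B - C \otimes D = (A-C) \otimes B + C \otimes (B - D),
\]
an identity verified by expanding both tensor products and cancelling the common term $C \otimes B$. This asymmetric decomposition is the heart of the argument.

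Next I would establish that each summand is positive semidefinite. By hypothesis $A - C \succeq 0$ and $B \succeq 0$, while $C \succeq 0$ and $B - D \succeq 0$, so it suffices to know that $X \otimes Y \succeq 0$ whenever $X, Y \succeq 0$. The cleanest way to see this is via positive square roots: since $X, Y \succeq 0$ they admit positive semidefinite square roots $\sqrt{X}, \sqrt{Y}$, and then
\[
X \otimes Y = (\sqrt{X} \otimes \sqrt{Y})^\dagger (\sqrt{X} \otimes \sqrt{Y}),
\]
using that $\sqrt{X} \otimes \sqrt{Y}$ is Hermitian together with the mixed-product property $(\sqrt{X} \otimes \sqrt{Y})(\sqrt{X} \otimes \sqrt{Y}) = X \otimes Y$. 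Any operator of the form $M^\dagger M$ is positive semidefinite, so $X \otimes Y \succeq 0$. Alternatively one can read this off directly from the spectral decompositions of $X$ and $Y$, whose tensor product has eigenvalues equal to products of the nonnegative eigenvalues of the factors.

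Finally, applying this to both summands yields $(A-C)\otimes B \succeq 0$ and $C \otimes (B - D) \succeq 0$, and since the sum of positive semidefinite operators is again positive semidefinite, $A \otimes B - C \otimes D \succeq 0$, i.e. $A \otimes B \succeq C \otimes D$. The only genuine subtlety is in the first step: the decomposition is deliberately asymmetric, and one must keep the retained factors ($B$ on the left summand and $C$ on the right) positive semidefinite so that tensor-product positivity applies to each piece. The tempting symmetric expression $(A-C)\otimes(B-D)$ does not work, as it discards precisely the cross terms, so the telescoping via an intermediate operator is essential.
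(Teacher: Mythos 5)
Your proof is correct and is essentially the paper's argument: the paper telescopes through the intermediate term $A \otimes D$, writing $A \otimes B - C \otimes D = A \otimes (B-D) + (A-C) \otimes D$, which is the mirror image of your decomposition through $C \otimes B$, and both rest on the same two facts (tensor products and sums of positive semidefinite operators are positive semidefinite). Your square-root justification of $X \otimes Y \succeq 0$ is a welcome extra detail that the paper simply asserts, but it does not change the approach.
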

\begin{proof}
    Given that $A \succeq C$ and $B \succeq D$, we have that $A - C$ and $B - D$ are positive semidefinite. To show $A \otimes B \succeq C \otimes D$, consider the expression
    \begin{equation}
        \begin{aligned}
            A \otimes B - C \otimes D &= 
        (A \otimes B) - (A \otimes D) + (A \otimes D) - (C \otimes D) \\        
        &= (A \otimes (B - D)) + ((A - C) \otimes D).
        \end{aligned}
    \end{equation}
    Since $A - C$ and $B - D$ are positive semidefinite, their tensor products with any positive semidefinite operator also yield positive semidefinite operators. Hence, $A \otimes (B - D)$ and $(A - C) \otimes D$ are positive semidefinite. The sum of two positive semidefinite operators remains positive semidefinite. Therefore, $A \otimes B - C \otimes D$ is positive semidefinite, and thus $A \otimes B \succeq C \otimes D$.
\end{proof}
We now present the proof of Theorem \ref{thm:main} for minimum-error discrimination.
\begin{proof}
    Without loss of generality, assume that an optimal measurement for minimum-error discrimination between the elements of the ensemble $\E^i$ is $M^i=\{M_x^i\}_{x=1}^{\l_i}, \text{ where } M^i_x \in \Pos(\C^{d_i}) \text{ and } \sum_{x=1}^{\l_i}M_x^i=\I_{d_i}$
    for all $x \in [\l_i]$ and $i \in [k]$, where $\Pos(\C^{d_i})$ denotes the set of positive semidefinite operators on $\C^{d_i}$. It follows that 
    \begin{equation} \label{opt-conds}
        \sum_{x=1}^{\l_i}\eta^i_x \rho^i_x M^i_x \succeq \eta^i_y \rho^i_y
    \end{equation}
    for all $y \in [\l_i]$ and $i \in [k]$. This represents a set of $\l_i$ conditions, one for each state in $\E^i$. We select $k$ of these conditions, one for each $i \in [k]$. Our selection can be denoted by a $k$-tuple $(y_1, \dots, y_k)$, which means that we are considering inequality \eqref{opt-conds} for the states $\rho^1_{y_1}, \dots, \rho^k_{y_k}$. 
    
    Taking the tensor product of the operators on both sides of these inequalities and applying Lemma \ref{lem:pos-matrices-eigs} we get 
    \begin{equation} \label{gen-seq-opt}
        \begin{aligned}
            \bigotimes_{i=1}^k \sum_{x_i=1}^{\l_i}\eta^i_{x_i} \rho^i_{x_i} M^i_{x_i} \succeq
            \bigotimes_{i=1}^k \eta^i_{y_i} \rho^i_{y_i},
        \end{aligned}
    \end{equation}
    where $y_i \in [\l_i]$ for all $i \in [k]$. The right side of the above inequality is the sequence $\rho^1_{y_1}\otimes \dots \otimes \rho^k_{y_k}$. By selecting all possible $k$-tuples from inequality \eqref{opt-conds} we will have a collection of conditions, one for each sequence, which is precisely what inequality \eqref{gen-seq-opt} represents. The left-hand side of \eqref{gen-seq-opt} can be expanded as  
    $$\sum_{x_1,\ldots,x_k}\eta^1_{x_1}\cdots \eta^k_{x_k}(\rho^1_{x_1} \otimes \cdots \otimes \rho^k_{x_k})(M^1_{x_1} \otimes \cdots \otimes M^k_{x_k}).$$ 
    As $M^i=\{M_x^i\}_{x=1}^{\l_i} $ forms a measurement on the ensemble $\E^i$, we observe that
   \begin{equation}
        \begin{aligned}
            \sum_{x_1,\ldots,x_k}M^1_{x_1} \otimes \cdots \otimes M^k_{x_k}=&\sum_{x_1=1}^{\l_1}M^1_{x_1} \otimes \cdots \otimes \sum_{x_k=1}^{\l_k}M^k_{x_k}\\ =&\I_{d_1}\otimes \cdots \otimes \I_{d_k}
        \end{aligned}     
    \end{equation}
showing that $\{M^1_{x_1} \otimes \cdots \otimes M^k_{x_k}\}$ is a bona fide measurement on $\E_k$. This, together with Eq. \eqref{gen-seq-opt}, demonstrates that the measurement whose elements are ${M^1_{x_1} \otimes \cdots \otimes M^k_{x_k}}$ is indeed an optimal measurement for discriminating the elements of the ensemble $\E_k$, as established by Theorem \ref{thm:me-opt}.  
    
Therefore, the optimal measurement for discriminating the sequences can be achieved by performing the optimal measurement for each ensemble on its corresponding state. As a result, the probability of correctly identifying the entire sequence is the product of the probabilities of correctly identifying each individual state within the sequence.
\end{proof}

\section{Unambiguous discrimination of quantum sequences} Unlike the minimum-error case, for nontrivial unambiguous discrimination, a set of quantum states has to satisfy a certain condition. In Sec. \ref{sec:unamb_con}, we derive the condition under which a set of quantum sequences can be unambiguously discriminated. We then present the proof of Theorem \ref{thm:main} in Sec. \ref{sec:unamb_thm}, which gives an explicit formula of $p(\E_k)$ in terms of $\{p(\E^i)\}$.

\subsection{Conditions for unambiguous discrimination of quantum sequences} \label{sec:unamb_con}
Let $Q=\{\sigma_1, \ldots, \sigma_N\}$ be a set of quantum states and $\supp(Q)$ denote the Hilbert space spanned by the eigenvectors of the matrices $\{\sigma_1, \ldots, \sigma_N\}$ that correspond to nonzero eigenvalues. Additionally, let $S(\E)$ represent the set of states of an ensemble $\E$.

\begin{lemma}(from Ref. \cite{feng2004unambiguous}) \label{lem:mixed-ud}
    The set of density operators $Q=\{\sigma_1, \ldots, \sigma_N\}$ can be unambiguously discriminated if and only if $\supp(Q) \neq \supp(Q_i)$, where $Q_i=Q\setminus\{\sigma_i\}$ for all $i \in [N]$. If $\sigma_i$ are rank-$1$ operators (pure states), then this condition is same as the set $Q$ being linearly independent. 
\end{lemma}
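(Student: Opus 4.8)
The plan is to prove the characterization of unambiguous discriminability for a set of density operators $Q = \{\sigma_1, \ldots, \sigma_N\}$ by reducing the statement to the defining feature of unambiguous discrimination: each measurement operator dedicated to identifying $\sigma_i$ must \emph{never} fire on any other state $\sigma_j$ with $j \neq i$. Concretely, an unambiguous measurement consists of POVM elements $\{E_1, \ldots, E_N, E_?\}$, where $E_i$ is the ``conclusively identified $\sigma_i$'' outcome, $E_?$ is the inconclusive outcome, and the zero-error requirement is $\tr(\sigma_j E_i) = 0$ for all $j \neq i$. Nontrivial discrimination means that for each $i$ there exists such an $E_i \succeq 0$ with $\tr(\sigma_i E_i) > 0$. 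The first step is to translate $\tr(\sigma_j E_i) = 0$ with $E_i \succeq 0$ and $\sigma_j \succeq 0$ into the support condition $\supp(\sigma_j) \subseteq \ker(E_i)$, equivalently $\ima(E_i) \subseteq \supp(\sigma_j)^\perp$.

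With that translation in hand, the second step is to observe that requiring $\tr(\sigma_j E_i) = 0$ simultaneously for all $j \neq i$ forces $\ima(E_i) \subseteq \bigl(\sum_{j \neq i} \supp(\sigma_j)\bigr)^\perp = \supp(Q_i)^\perp$, where $Q_i = Q \setminus \{\sigma_i\}$. The third step is the crux: $\sigma_i$ can be conclusively identified, i.e.\ there exists $E_i \succeq 0$ with $\ima(E_i) \subseteq \supp(Q_i)^\perp$ and $\tr(\sigma_i E_i) > 0$, if and only if $\supp(\sigma_i)$ is not entirely contained in $\supp(Q_i)$. The forward direction is immediate: if $\tr(\sigma_i E_i) > 0$ then $\supp(\sigma_i)$ has a nonzero component orthogonal to $\supp(Q_i)$, so $\supp(\sigma_i) \not\subseteq \supp(Q_i)$. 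For the reverse direction, if $\supp(\sigma_i) \not\subseteq \supp(Q_i)$, one picks a unit vector $\ket{v} \in \supp(\sigma_i)$ with nonzero component in $\supp(Q_i)^\perp$, projects it to obtain $\ket{w} \in \supp(Q_i)^\perp$ with $\braket{v}{w} \neq 0$, and sets $E_i = \ketbra{w}{w}$, which satisfies the orthogonality constraints while giving $\tr(\sigma_i E_i) > 0$.

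The fourth step is to package the per-state conditions into the symmetric statement of the lemma. The condition $\supp(\sigma_i) \not\subseteq \supp(Q_i)$ holding for every $i$ is exactly $\supp(Q) \neq \supp(Q_i)$ for every $i$: since $\supp(Q) = \supp(\sigma_i) + \supp(Q_i) \supseteq \supp(Q_i)$ always holds, strict containment is equivalent to $\supp(\sigma_i)$ contributing something new, i.e.\ $\supp(\sigma_i) \not\subseteq \supp(Q_i)$. One still has to check that individual conclusive operators $\{E_i\}$ can be assembled into a genuine POVM; this is routine, because one may rescale each $E_i$ by a small positive constant so that $\sum_i E_i \preceq \I$ and then define the inconclusive element $E_? = \I - \sum_i E_i \succeq 0$, which preserves all the zero-error constraints and keeps each $\tr(\sigma_i E_i) > 0$. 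Finally, I would note the pure-state specialization: when each $\sigma_i = \ketbra{\psi_i}{\psi_i}$ is rank one, $\supp(Q_i) = \spn\{\ket{\psi_j} : j \neq i\}$, and $\ket{\psi_i} \notin \spn\{\ket{\psi_j} : j \neq i\}$ for all $i$ is precisely the statement that $\{\ket{\psi_1}, \ldots, \ket{\psi_N}\}$ is linearly independent.

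The main obstacle I anticipate is the reverse (``if'') direction of the per-state equivalence, namely constructing an admissible conclusive operator $E_i$ from the support hypothesis. The subtlety is that one needs a single $E_i$ that is orthogonal to the span of \emph{all} the other supports at once while still overlapping $\supp(\sigma_i)$; the clean way around this is to work with the projector $\Pi_i$ onto $\supp(Q_i)^\perp$ and argue that $\Pi_i \sigma_i \Pi_i \neq 0$ precisely when $\supp(\sigma_i) \not\subseteq \supp(Q_i)$, so that $E_i$ may be taken proportional to any rank-one piece of $\Pi_i \sigma_i \Pi_i$ (or to $\Pi_i$ itself). Everything else, including the POVM-assembly rescaling and the pure-state corollary, is bookkeeping.
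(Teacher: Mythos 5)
Your proof is correct, but there is no in-paper argument to compare it against: the paper states this lemma without proof, importing it verbatim from Ref.~\cite{feng2004unambiguous}, and only uses it as a black box (via Lemma~\ref{lem:ud-cond} and Theorem~\ref{thm:ud-eqv}). On its own merits, your argument is sound and is essentially the standard one from that reference: translate the zero-error constraints $\tr(\sigma_j E_i)=0$ for positive semidefinite operators into the support condition $\ima(E_i)\subseteq \supp(Q_i)^{\perp}$; observe that a conclusive element for $\sigma_i$ exists if and only if $\Pi_i \sigma_i \Pi_i \neq 0$, where $\Pi_i$ is the projector onto $\supp(Q_i)^{\perp}$; identify this with $\supp(\sigma_i)\not\subseteq\supp(Q_i)$; and note that, since $\supp(Q)=\supp(\sigma_i)+\supp(Q_i)\supseteq\supp(Q_i)$, the per-state condition is exactly the symmetric condition $\supp(Q)\neq\supp(Q_i)$ of the lemma. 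The two spots where a blind attempt typically stumbles are both handled: (i) in the reverse direction you need $\bra{w}\sigma_i\ket{w}>0$, which indeed follows because $\braket{v}{w}\neq 0$ with $\ket{v}\in\supp(\sigma_i)$ forces $\ket{w}\notin\ker(\sigma_i)$ (or, as you remark, one may simply take $E_i\propto\Pi_i$, since $\tr(\sigma_i\Pi_i)=\tr(\Pi_i\sigma_i\Pi_i)>0$ precisely under the support hypothesis); and (ii) the rescaling $E_i\mapsto E_i/N$ legitimately assembles the per-state conclusive operators into a POVM with $E_?=\I-\sum_i E_i\succeq 0$, preserving the zero-error constraints and the strict positivity of each conclusive probability. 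The pure-state specialization to linear independence is also stated and justified correctly. One cosmetic observation: the lemma as used in the paper is purely qualitative (existence of some unambiguous measurement), which is exactly what you prove; the quantitative optimal success probability is treated separately in the paper through the SDP of Ref.~\cite{eldar2004optimal}, so nothing further is required of your argument.
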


Since we now want to discriminate the set of sequences unambiguously, they must satisfy Lemma \ref{lem:mixed-ud}. That is, for $S(\E_k)$ to be unambiguously discriminable, it must hold that  
\begin{equation} \label{eq:seq-ud-cond}
    \supp(S(\E_k)) \neq S(\E_k)\setminus\{\rho^1_{x_1}\otimes \cdots \otimes \rho^k_{x_k}\}
\end{equation}  
for all $x_i \in [\l_i]$ and $i\in [k]$. In the following lemma, we show that a set of quantum sequences satisfies this condition if and only if the individual ensembles do. 
\begin{lemma}\label{lem:ud-cond}
    The set of quantum sequences satisfies the condition $ \supp(S(\E_k)) \neq S(\E_k)\setminus\{\rho^1_{x_1}\otimes \cdots \otimes \rho^k_{x_k}\}$ for all $x_i \in [\l_i]$ and $i \in [k]$ if and only if the ensembles $\E^i$ satisfy $\supp(S(\E^i)) \neq \supp(S(\E^i))\setminus \{\rho^i_j\}$ for all $i \in [k]$ and $j \in [\l_i]$.
\end{lemma}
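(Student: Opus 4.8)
The plan is to translate both conditions into statements about sums of support subspaces and then exploit the multiplicativity of support under tensor products. Write $V_i = \supp(S(\E^i))$ for the support of the $i$th ensemble, and for a tuple $t=(x_1,\dots,x_k)$ write $W_t = \supp(\rho^1_{x_1}\otimes\cdots\otimes\rho^k_{x_k})$. The first step is the elementary identity $\supp(\rho^1\otimes\cdots\otimes\rho^k)=\supp(\rho^1)\otimes\cdots\otimes\supp(\rho^k)$, giving $W_t=\supp(\rho^1_{x_1})\otimes\cdots\otimes\supp(\rho^k_{x_k})$. Taking spans over all tuples and using distributivity of tensor products over sums of subspaces yields $\supp(S(\E_k))=\sum_t W_t = V_1\otimes\cdots\otimes V_k$, and likewise $\supp\bigl(S(\E_k)\setminus\{\rho^1_{x_1}\otimes\cdots\otimes\rho^k_{x_k}\}\bigr)=\sum_{s\neq t}W_s$. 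Since deleting a state can only shrink the support, every instance of the condition ``$\supp(\cdot)\neq\supp(\cdot\setminus\{\cdot\})$'' is equivalent to a strict inclusion of the form ``removing one summand strictly shrinks the total span,'' both for $\E_k$ and for each $\E^i$.

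Next I would treat the two directions separately. For the implication that the component conditions imply the sequence condition, I would fix a tuple $t=(x_1,\dots,x_k)$. For each $i$, the hypothesis $\supp(S(\E^i)\setminus\{\rho^i_{x_i}\})\subsetneq\supp(S(\E^i))$ furnishes a vector $v_i\in\supp(\rho^i_{x_i})$ lying outside $\sum_{j\neq x_i}\supp(\rho^i_j)$; by finite-dimensional duality there is a linear functional $f_i$ on $\C^{d_i}$ with $f_i(v_i)\neq 0$ that annihilates $\supp(\rho^i_j)$ for every $j\neq x_i$. The key step is to form the product functional $f=f_1\otimes\cdots\otimes f_k$ and evaluate it on $v=v_1\otimes\cdots\otimes v_k$: one has $f(v)=\prod_i f_i(v_i)\neq 0$, while for any $s\neq t$ there is a coordinate $i_0$ with $s_{i_0}\neq x_{i_0}$, so $f_{i_0}$ kills the $i_0$th factor of every elementary tensor in $W_s$ and hence $f$ annihilates $\sum_{s\neq t}W_s$. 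Thus $v\in V_1\otimes\cdots\otimes V_k=\supp(S(\E_k))$ while $v\notin\sum_{s\neq t}W_s$, which establishes the strict inclusion for the tuple $t$; since $t$ was arbitrary, the sequence condition holds.

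For the converse I would argue by contrapositive: suppose some $\E^{i_0}$ fails the condition, so there is an index $m$ with $\supp(\rho^{i_0}_m)\subseteq\sum_{j\neq m}\supp(\rho^{i_0}_j)$. Choosing any tuple $t$ with $t_{i_0}=m$, I would insert this inclusion into the $i_0$th tensor slot of $W_t$ and distribute the tensor product over the sum, obtaining $W_t\subseteq\sum_{j\neq m}W_{s(j)}$, where $s(j)$ agrees with $t$ except in coordinate $i_0$, where it equals $j$. Since each such $s(j)\neq t$, this gives $W_t\subseteq\sum_{s\neq t}W_s$, so $\sum_{s\neq t}W_s=\sum_s W_s=\supp(S(\E_k))$; deleting $\rho^1_{x_1}\otimes\cdots\otimes\rho^k_{x_k}$ leaves the total span unchanged, and the sequence condition fails.

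The main obstacle is the forward direction. One must state and use the span-of-tensor-products identity $\sum_t W_t=V_1\otimes\cdots\otimes V_k$ correctly (the span of a union of tensor-product subspaces, not of individual tensor vectors), and then verify that the product functional $f$ simultaneously detects the chosen witness $v$ and annihilates every omitted summand $W_s$. Once that construction is in place, the remainder is routine bookkeeping with tensor-product distributivity and the multiplicativity of support over tensor products.
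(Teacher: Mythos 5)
Your proof is correct, and it takes a genuinely different route from the paper's --- one that is, in fact, more robust at the delicate point. The paper argues the forward direction by contradiction using eigenvector bookkeeping: it picks witness eigenvectors $\ket{\psi(i,y_i,t_{i,y_i})}\notin\spn(A^i_{y_i})$, identifies $\supp(S(\E_k)\setminus\{\rho^1_{y_1}\otimes\cdots\otimes\rho^k_{y_k}\})$ with the tensor product $\spn(A^1_{y_1})\otimes\cdots\otimes\spn(A^k_{y_k})$ (its displayed span is restricted to tuples with $x_i\neq y_i$ for \emph{all} $i$), and then applies its Lemmas on tensor products of subspaces to exclude the witness tensor. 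But the deleted-state support is really $\sum_{s\neq t}W_s$ in your notation, which also contains contributions from tuples agreeing with $t$ in \emph{some} coordinates, and that sum is strictly larger than $\spn(A^1_{y_1})\otimes\cdots\otimes\spn(A^k_{y_k})$ in general; showing the witness avoids the full sum needs more than the paper's Lemma on $a_1\otimes\cdots\otimes a_k\notin\W_1\otimes\cdots\otimes\W_k$. Your product functional $f=f_1\otimes\cdots\otimes f_k$ handles exactly this: for every $s\neq t$ some coordinate $i_0$ has $s_{i_0}\neq x_{i_0}$, so $f_{i_0}$ annihilates that factor and hence $f$ kills all of $\sum_{s\neq t}W_s$, partially-agreeing tuples included, while $f(v_1\otimes\cdots\otimes v_k)\neq 0$. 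So the duality argument buys a clean, direct (non-contradiction) proof that closes the case the paper's identification glosses over. Your converse is the same in substance as the paper's --- both show the removed sequence's support is absorbed by the remaining ones --- though yours is phrased via distributivity of $\otimes$ over subspace sums rather than explicit eigenvector tensors, and yours correctly lets the spectator coordinates of $t$ be arbitrary. One small step worth making explicit in a final write-up: the existence of the witness $v_i\in\supp(\rho^i_{x_i})\setminus\sum_{j\neq x_i}\supp(\rho^i_j)$ follows because $\supp(S(\E^i))=\supp(\rho^i_{x_i})+\sum_{j\neq x_i}\supp(\rho^i_j)$, so if no such $v_i$ existed the component condition would fail for the index $x_i$.
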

We will need the following two lemmas to prove Lemma \ref{lem:ud-cond}.
\begin{lemma} \label{lem:lemma1}
    Let $k$ be a positive integer and $\V$ be a vector space. For each $i \in [k]$, let $\V_i \subseteq \V$ and $\W_i \subseteq \V_i$ be subspaces. If $a_i \in \V_i$ and $a_i \notin \W_i$, for each $i \in [k]$, then $a_1 \otimes \cdots \otimes a_k \notin \W_1 \otimes \cdots \otimes \W_k$.
\end{lemma}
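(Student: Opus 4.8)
The plan is to separate each $a_i$ from $\W_i$ using a linear functional and then take the tensor product of these functionals. Since the relevant spaces here are finite-dimensional (the underlying Hilbert spaces $\C^{d_i}$ are finite-dimensional), for each $i \in [k]$ the hypothesis $a_i \in \V_i \setminus \W_i$ guarantees a linear functional $f_i \colon \V_i \to \C$ with $f_i|_{\W_i} = 0$ and $f_i(a_i) \neq 0$. Concretely, one extends a basis of $\W_i$ to a basis of $\V_i$ that contains $a_i$ as one of its vectors (possible precisely because $a_i \notin \W_i$), and takes $f_i$ to be the dual coordinate functional picking out the $a_i$-component. This is the only place the hypothesis $a_i \notin \W_i$ is used.

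Next I would form the product functional $f = f_1 \otimes \cdots \otimes f_k$ on $\V_1 \otimes \cdots \otimes \V_k$, which acts on product vectors by $f(b_1 \otimes \cdots \otimes b_k) = \prod_{i=1}^k f_i(b_i)$ and is extended by linearity. The subspace $\W_1 \otimes \cdots \otimes \W_k$ is spanned by product vectors $w_1 \otimes \cdots \otimes w_k$ with $w_i \in \W_i$, and on any such generator $f$ evaluates to $\prod_{i=1}^k f_i(w_i) = 0$, since at least one (in fact every) factor vanishes. Hence $f$ annihilates all of $\W_1 \otimes \cdots \otimes \W_k$.

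Finally, evaluating on the vector of interest gives $f(a_1 \otimes \cdots \otimes a_k) = \prod_{i=1}^k f_i(a_i) \neq 0$, as each factor is nonzero. If $a_1 \otimes \cdots \otimes a_k$ were an element of $\W_1 \otimes \cdots \otimes \W_k$, then $f$ would have to vanish on it, a contradiction. Therefore $a_1 \otimes \cdots \otimes a_k \notin \W_1 \otimes \cdots \otimes \W_k$, as claimed.

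The only points requiring care — and the main things to verify rather than a genuine obstacle — are the well-definedness of the product functional $f$ on the tensor product and the claim that $\W_1 \otimes \cdots \otimes \W_k$ is in fact spanned by product vectors drawn from its factors; both are standard properties of tensor products. I therefore expect no real difficulty, and I note that the same argument works verbatim over any field and in arbitrary dimension, using the Hahn--Banach theorem to produce the separating functionals $f_i$ in the infinite-dimensional case.
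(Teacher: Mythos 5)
Your proof is correct, and it takes a route that is dual to, but cleaner than, the paper's. The paper also begins by extending a basis of $\W_i$ to a basis of $\V_i$, but then argues in coordinates: it expands each $a_i$, locates a nonzero coefficient $\lambda_{j_i}^i$ on some basis vector $v_{j_i}^i \notin \W_i$, and concludes that the expansion of $a_1 \otimes \cdots \otimes a_k$ contains a nonzero term outside $\W_1 \otimes \cdots \otimes \W_k$, hence the vector itself lies outside. That last inference is left implicit in the paper --- strictly it requires noting that the coordinate of $a_1 \otimes \cdots \otimes a_k$ along the product basis vector $v_{j_1}^1 \otimes \cdots \otimes v_{j_k}^k$ is nonzero while every element of $\W_1 \otimes \cdots \otimes \W_k$ has zero coordinate there, which is exactly the dual-pairing argument you make explicit. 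Your annihilating functional $f = f_1 \otimes \cdots \otimes f_k$ packages this step rigorously: well-definedness follows from the universal property (multilinearity of $(b_1,\ldots,b_k) \mapsto \prod_i f_i(b_i)$), $f$ kills the spanning product vectors of $\W_1 \otimes \cdots \otimes \W_k$, and $f(a_1 \otimes \cdots \otimes a_k) \neq 0$. So your version buys airtightness and, as you note, field- and dimension-independence; the paper's version buys a concrete, computation-flavored argument requiring no dual-space machinery. One small correction to your closing remark: in the purely algebraic infinite-dimensional setting you do not need Hahn--Banach --- basis extension via Zorn's lemma already yields the separating functional $f_i$; Hahn--Banach is needed only if you insist on \emph{continuous} functionals on a normed space, and then $\W_i$ must additionally be closed.
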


\begin{proof}
For each $i \in [k]$, let $\{u_1^i, \ldots, u_{s_i}^i\}$ be a basis of $\W_i$, and extend it to a basis $\{u_1^i, \ldots, u_{s_i}^i, v_1^i, \ldots, v_{t_i}^i\}$ of $\V_i$. First, express each $a_i$ in terms of the basis of $\V_i$,
\begin{equation}
    a_i = b_i + \sum_{j=1}^{t_i} \lambda_j^i v_j^i,
\end{equation}
where $b_i \in \W_i$. Since $a_i \notin \W_i$, there exists at least one $j_i$ such that $\lambda_{j_i}^i \neq 0$. Consider the tensor product $a_1 \otimes \cdots \otimes a_k$. Its expansion includes the term
\begin{equation}
    (\lambda_{j_1}^1 \cdots \lambda_{j_k}^k) (v_{j_1}^1 \otimes \cdots \otimes v_{j_k}^k).
\end{equation}
This term is nonzero because each $\lambda_{j_i}^i \neq 0$. Observe that $v_{j_1}^1 \otimes \cdots \otimes v_{j_k}^k \notin \W_1 \otimes \cdots \otimes \W_k$, as each $v_{j_i}^i \notin \W_i$. Finally, note that $a_1 \otimes \cdots \otimes a_k \notin \W_1 \otimes \cdots \otimes \W_k$ because its expansion contains a nonzero term that is not in $\W_1 \otimes \cdots \otimes \W_k$.
\end{proof}

\begin{lemma}\label{lem:lemma2}
    Let $\V_i$ be vector spaces and $B_i = \{b_1^i, \ldots, b_{\ell_i}^i\} \subseteq \V_i$ for $i \in [k]$. Let $\W_i = \spn(B_i)$ be the subspaces spanned by $B_i$. Then, 
    \begin{equation}
        \W_1 \otimes \cdots \otimes \W_k = \spn \{b_{x_1}^1 \otimes \cdots \otimes b_{x_k}^k \mid x_i \in [\ell_i], \ \forall i \in [k] \}.
    \end{equation}
\end{lemma}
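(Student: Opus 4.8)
The plan is to prove the equality by establishing the two set inclusions separately; both follow directly from the definition of the tensor product of subspaces together with the multilinearity of $\otimes$.

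I would first dispose of the inclusion $\spn\{b_{x_1}^1 \otimes \cdots \otimes b_{x_k}^k\} \subseteq \W_1 \otimes \cdots \otimes \W_k$. Since $b_{x_i}^i \in B_i \subseteq \W_i$ for every index $x_i$, each product vector $b_{x_1}^1 \otimes \cdots \otimes b_{x_k}^k$ is by definition a member of $\W_1 \otimes \cdots \otimes \W_k$. As the latter is a vector space and hence closed under linear combinations, it contains the span of all such product vectors.

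The reverse inclusion is the substantive step. A generic element of $\W_1 \otimes \cdots \otimes \W_k$ is a finite linear combination of simple tensors $w_1 \otimes \cdots \otimes w_k$ with $w_i \in \W_i$, so it suffices to place each such simple tensor in the span on the right. Because $\W_i = \spn(B_i)$, I can expand $w_i = \sum_{x_i=1}^{\ell_i} c_{x_i}^i\, b_{x_i}^i$ for suitable scalars $c_{x_i}^i$, and then multilinearity of the tensor product gives
$$w_1 \otimes \cdots \otimes w_k = \sum_{x_1,\ldots,x_k} \left(\prod_{i=1}^{k} c_{x_i}^i\right) b_{x_1}^1 \otimes \cdots \otimes b_{x_k}^k,$$
which is visibly a linear combination of the generators. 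Thus every simple tensor, and therefore every element of $\W_1 \otimes \cdots \otimes \W_k$, lies in $\spn\{b_{x_1}^1 \otimes \cdots \otimes b_{x_k}^k\}$, completing the second inclusion and hence the proof.

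I do not expect any genuine obstacle here, since this is a standard structural fact about tensor products; the only point requiring attention is the bookkeeping in the multilinear expansion, where the $k$-fold product of scalars must be distributed over all index tuples $(x_1,\ldots,x_k)$. An alternative route is induction on $k$, using the bilinear case $\W_1 \otimes \W_2$ as the base step and the associativity of iterated tensor products; however, the direct expansion above is more transparent and avoids that extra machinery.
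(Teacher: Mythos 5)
Your proof is correct and follows essentially the same route as the paper's: the easy inclusion via $b_{x_i}^i \in \W_i$ and closure under span, and the reverse inclusion by expanding each $w_i$ in the spanning set $B_i$ and distributing via multilinearity over all index tuples, noting that $\W_1 \otimes \cdots \otimes \W_k$ is spanned by simple tensors. No gaps; nothing further is needed.
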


\begin{proof}
We will show that $\W_1 \otimes \cdots \otimes \W_k = \spn \{b_{x_1}^1 \otimes \cdots \otimes b_{x_k}^k \mid x_i \in [\ell_i], \ \forall i \in [k] \}$ by proving both inclusions. First, consider 
\begin{equation}
    \spn \{b_{x_1}^1 \otimes \cdots \otimes b_{x_k}^k \mid x_i \in [\ell_i], \ \forall i \in [k] \} \subseteq \W_1 \otimes \cdots \otimes \W_k.
\end{equation}
For all $i \in [k]$ and each $x_i \in [\ell_i]$, we have $b_{x_i}^i \in \W_i$. Therefore, $b_{x_1}^1 \otimes \cdots \otimes b_{x_k}^k \in \W_1 \otimes \cdots \otimes \W_k$. As $\W_1 \otimes \cdots \otimes \W_k$ is a subspace, it contains the span of these tensors.

Now, for the reverse inclusion 
\begin{equation}
    \W_1 \otimes \cdots \otimes \W_k \subseteq \spn \{b_{x_1}^1 \otimes \cdots \otimes b_{x_k}^k \mid x_i \in [\ell_i], \ \forall i \in [k] \}.
\end{equation}
Let $w_1 \otimes \cdots \otimes w_k$ be an arbitrary pure tensor in $\W_1 \otimes \cdots \otimes \W_k$.
For each $i \in [k]$, we can write $w_i = \sum_{j=1}^{\ell_i} \lambda_j^i b_j^i$ as $w_i \in \W_i = \spn(B_i)$.
Expanding the pure tensor
\begin{equation}
w_1 \otimes \cdots \otimes w_k = \sum_{j_1=1}^{\ell_1} \cdots \sum_{j_k=1}^{\ell_k} (\lambda_{j_1}^1 \cdots \lambda_{j_k}^k) (b_{j_1}^1 \otimes \cdots \otimes b_{j_k}^k)
\end{equation}
This expansion shows that $w_1 \otimes \cdots \otimes w_k \in \spn \{b_{x_1}^1 \otimes \cdots \otimes b_{x_k}^k \mid x_i \in [\ell_i], \ \forall i \in [k] \}$.
As $\W_1 \otimes \cdots \otimes \W_k$ is spanned by such pure tensors, the inclusion follows. Therefore, $\W_1 \otimes \cdots \otimes \W_k = \spn \{b_{x_1}^1 \otimes \cdots \otimes b_{x_k}^k \mid x_i \in [\ell_i], \ \forall i \in [k] \}$.
\end{proof}
We are now ready for the proof of Lemma \ref{lem:ud-cond}.
\begin{proof}[Proof of Lemma \ref{lem:ud-cond}]%~\ref{lem:ud-cond}]
   We first prove that if $S(\E^i)$ satisfies Lemma \ref{lem:mixed-ud} 
   for each $i$, then $S(\E_{k})$ also satisfies Lemma \ref{lem:mixed-ud}.
   Therefore, we assume that for each $i \in [k]$, $S(\E^i)$ satisfies Lemma \ref{lem:mixed-ud}.
   For each $\rho_j^i \in S(\E^i)$, let $A(i,j) = \{\ket{\psi(i,j,t)}\}_{t=1}^{m_{ij}}$ be the set of eigenvectors corresponding to nonzero eigenvalues of $\rho_j^i$, where $m_{ij}$ is the number of such eigenvectors. Thus, $\supp(\rho_j^i) = \spn(A(i,j))$. Define
    \begin{equation}
        \begin{aligned}
            A^i &= \{\ket{\psi(i,j,t)} : t \in [m_{ij}], \ j \in [\ell_i]\}, \\
            A^i_y &= \{\ket{\psi(i,j,t)} : t \in [m_{ij}], \ j \in [\ell_i], \ j \neq y\} \text{ for } y \in [\ell_i].
        \end{aligned}
    \end{equation}    
    By our assumption, for each $i \in [k]$, $\spn(A^i) \neq \spn(A^i_y)$ for all $y \in [\ell_i]$. Thus, there exists $t_{i,y} \in [m_{iy}]$ such that $\ket{\psi(i,y,t_{i,y})} \notin \spn(A^i_y)$. Suppose, for contradiction, that $S(\E_{k})$ does not satisfy Lemma \ref{lem:mixed-ud}. 
    Then there exist $y_i \in [\ell_i]$ for all $i \in [k]$ such that
    \begin{equation}
        \supp(S(\E_{k})) = \supp(S(\E_{k}) \setminus \{\rho_{y_1}^1 \otimes \cdots \otimes \rho_{y_k}^k\}).    
    \end{equation}
    The eigenvectors corresponding to nonzero eigenvalues of $\rho_{y_1}^1 \otimes \cdots \otimes \rho_{y_k}^k$ are
    \begin{equation}
        \{\ket{\psi(1,y_1,s_{1,y_1})} \otimes \cdots \otimes \ket{\psi(k,y_k,s_{k,y_k})} : s_{i,y_i} \in [m_{i,y_i}]\}.    
    \end{equation}
    Thus,
    \begin{equation}
        \begin{aligned}
        &\supp(S(\E_{k}) \setminus \{\rho_{y_1}^1 \otimes \cdots \otimes \rho_{y_k}^k\}) \\
        &= \spn(\{\ket{\psi(1,x_1,s_{1,x_1})} \otimes \cdots \otimes \ket{\psi(k,x_k,s_{k,x_k})} : s_{i,x_i} \in [m_{i,x_i}], \ x_i \neq y_i \; \forall i \in [k]\}) \\
        &= \bigotimes_{i=1}^k \spn(\{\ket{\psi(i,x_i,s_{i,x_i})}\}), \; s_{i,x_i} \in [m_{i,x_i}], \ x_i \neq y_i \; \forall i \in [k] \\
        &= \spn(A^1_{y_1}) \otimes \cdots \otimes \spn(A^k_{y_k}),
        \end{aligned}    
    \end{equation} where we have used Lemma \ref{lem:lemma1} in the second equality. This equals $\supp(S(\E_{k}))$ by our assumption. However by Lemma \ref{lem:lemma2} we see that, 
    \begin{equation}
        \ket{\psi(1,y_1,t_{1,y_1})} \otimes \cdots \otimes \ket{\psi(k,y_k,t_{k,y_k})} \notin \spn(A^1_{y_1}) \otimes \cdots \otimes \spn(A^k_{y_k}),    
    \end{equation}
    leading to a contradiction.
    
    % only if direction:
    We now prove the converse by showing that if for some $i \in [k], ~S(\E^i)$ does not satisfy Lemma \ref{lem:mixed-ud}
    then $S(\E_{k})$ also fail to satisfy Lemma \ref{lem:mixed-ud}. 
    Therefore, assume that for some $i \in [k]$, the set $S(\E^i)$ does not satisfy Lemma \ref{lem:mixed-ud}. 
    Then there exists a $y \in [\ell_i]$ such that $\spn(A^i) = \spn(A^i_y)$. This implies $\ket{\psi(i,y,s_{i,y})} \in \spn(A^i_y)$ for all $s_{i,y} \in [m_{i,y}]$. Consider 
    \begin{equation}
        S(\E_{k}) \setminus \{\rho_1^1 \otimes \cdots \otimes \rho_y^i \otimes \cdots \otimes \rho_1^k\}.
    \end{equation} 
    The support of the removed state is
    \begin{equation}
        \spn\{\ket{\psi(1,1,s_{1,1})} \otimes \cdots \otimes \ket{\psi(i,y,s_{i,y})} \otimes \cdots \otimes \ket{\psi(k,1,s_{k,1})} : s_{a,b} \in [m_{a,b}]\}.
    \end{equation}
    Each element 
    \begin{equation}
        \ket{\psi(1,1,s_{1,1})} \otimes \cdots \otimes \ket{\psi(i,y,s_{i,y})} \otimes \cdots \otimes \ket{\psi(k,1,s_{k,1})}    
    \end{equation} 
    is in $\spn(A^1_{x_1}) \otimes \cdots \otimes \spn(A^i_y) \otimes \cdots \otimes \spn(A^k_{x_k})$ where $x_j \neq 1$ for all $j \in [k]$ except for $j=i$, for which $x_j = y$. However, 
    \begin{equation}
        \spn(A^1_{x_1}) \otimes \cdots \otimes \spn(A^i_y) \otimes \cdots \otimes \spn(A^k_{x_k}) \subset \supp(S(\E_{k}) \setminus \{\rho_1^1 \otimes \cdots \otimes \rho_y^i \otimes \cdots \otimes \rho_1^k\}).
    \end{equation}    
    This shows that 
    \begin{equation}
        \supp(S(\E_{k}) \setminus \{\rho_1^1 \otimes \cdots \otimes \rho_y^i \otimes \cdots \otimes \rho_1^k\}) = \supp(S(\E_{k})),
    \end{equation}
    proving that $S(\E_{k})$ fails to satisfy Lemma \ref{lem:mixed-ud}.
\end{proof}
The following theorem is immediate from Lemmas~\ref{lem:mixed-ud} and~\ref{lem:ud-cond}. 
\begin{theorem}[Condition for unambiguous discrimination of quantum sequences] \label{thm:ud-eqv}
    The elements of \( S(\E_k) \) can be unambiguously discriminated if and only if the elements of \( S(\E^i) \) can be unambiguously discriminated for each \( i \).

\end{theorem}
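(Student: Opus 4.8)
The plan is to prove Theorem \ref{thm:ud-eqv} as a direct corollary of the two preceding lemmas by chaining together three equivalences. First I would invoke Lemma \ref{lem:mixed-ud} at the level of the full sequence ensemble: applied to the set $Q = S(\E_k)$, it states that the elements of $S(\E_k)$ are unambiguously discriminable if and only if the support condition $\supp(S(\E_k)) \neq \supp(S(\E_k) \setminus \{\rho^1_{x_1} \otimes \cdots \otimes \rho^k_{x_k}\})$ holds for every tuple $(x_1, \ldots, x_k)$ with $x_i \in [\l_i]$. This is exactly the condition \eqref{eq:seq-ud-cond} isolated in the previous subsection.

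Next I would apply Lemma \ref{lem:ud-cond}, which establishes that this support condition on $S(\E_k)$ is equivalent to the componentwise condition $\supp(S(\E^i)) \neq \supp(S(\E^i) \setminus \{\rho^i_j\})$ holding for every $i \in [k]$ and every $j \in [\l_i]$. Finally, I would apply Lemma \ref{lem:mixed-ud} once more, this time at the level of each individual ensemble $\E^i$, to recognize that the componentwise support condition for $\E^i$ is precisely the statement that $S(\E^i)$ is unambiguously discriminable. Stringing the three equivalences together yields that $S(\E_k)$ is unambiguously discriminable if and only if $S(\E^i)$ is unambiguously discriminable for each $i$, which is the desired claim.

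The argument requires essentially no new work beyond correctly matching hypotheses to conclusions: the only point demanding care is ensuring that Lemma \ref{lem:mixed-ud} is applied consistently to the two distinct families of operators, namely the tensor-product sequences on the one hand and the individual component states on the other, since the indexing conventions differ. I do not expect a substantial obstacle here, as the entire nontrivial content of the equivalence was already absorbed into Lemma \ref{lem:ud-cond}, whose proof carried out the subspace and tensor-product reasoning (via Lemmas \ref{lem:lemma1} and \ref{lem:lemma2}). Consequently the theorem follows immediately, exactly as the excerpt anticipates with the phrase ``immediate from Lemmas~\ref{lem:mixed-ud} and~\ref{lem:ud-cond}.''
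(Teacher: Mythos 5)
Your proposal is correct and matches the paper exactly: the paper declares Theorem \ref{thm:ud-eqv} ``immediate from Lemmas~\ref{lem:mixed-ud} and~\ref{lem:ud-cond},'' which is precisely your chain of three equivalences (Lemma \ref{lem:mixed-ud} at the sequence level, Lemma \ref{lem:ud-cond} to pass to the components, Lemma \ref{lem:mixed-ud} again for each $\E^i$). No gap; you also correctly read the support condition \eqref{eq:seq-ud-cond} with $\supp$ on both sides, repairing a typographical omission in the paper's display.
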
 

\subsection{Proof of Theorem \ref{thm:main}} \label{sec:unamb_thm}
We now present the proof of Theorem \ref{thm:main} for the unambiguous case. The unambiguous discrimination problem of mixed quantum states can be cast as a semidefinite program (SDP) \cite{eldar2004optimal}. Let us first present this SDP for an ensemble of states $\E=(Q, q)$, where $Q = \{\sigma_i\}_{i = 1}^N$ is a set of density operators acting on $\C^d$ and $q=(q_1, \ldots, q_N)$ is the vector of \textit{a priori} probabilities of states of $Q$. Without loss of generality, we can assume that the eigenvectors of the states of $Q$ that correspond to nonzero eigenvalues span $\C^d$. This is analogous to assuming that for unambiguous discrimination of pure states, we can take the dimension of the Hilbert space they live in to equal the dimension of the space spanned by them. We define $\widetilde{Q_i}$ as the intersection of all kernels $\K_j$ of $\sigma_j$, excluding $\K_i$. This can be expressed mathematically as $\widetilde{Q_i} = \cap_{j = 1,j\neq i}^{N}\K_j$. We also introduce $\Theta_i$, a $d\times r_i$ matrix whose columns form an orthonormal basis for $\widetilde{Q_i}$, where $r_i$ is the dimension of $\widetilde{Q_i}$. With these definitions in place, we can formulate the SDP for determining the optimal probability of unambiguous discrimination of the ensemble $\E$ as follows: \\
    \begin{widetext}
   % \begin{center} 
  \begin{equation} \label{eq:unamb-sdp}
      \begin{minipage}{2.6in}
        \centerline{\underline{Primal problem}}\vspace{0mm}
          \begin{align*}
              \text{maximize:}\quad & \sum_{i=1}^{N}q_i \tr(\sigma_i\Theta_i\Delta_i\Theta_i^{\dagger}) \nonumber \\
    \text{subject to:}\quad & \sum_{i=1}^{N} \Theta_i\Delta_i\Theta_i^{\dagger} \preceq \I, \\ 
    & \Delta_i \succeq 0, \quad \forall i \in [N] \nonumber
          \end{align*}
      \end{minipage}
    \hspace*{13mm}
      \begin{minipage}{2.6in} \vspace{-10mm}
        \centerline{\underline{Dual problem}}\vspace{0mm}
     \begin{align}
          \text{minimize:}\quad & \tr(Z) \nonumber \\ 
          \text{subject to:}\quad & \Theta_i^\dagger\left(Z-q_i \sigma_i\right) \Theta_i \succeq 0 \quad \forall i \in [N], \nonumber  \\ 
            & Z \succeq 0 \nonumber 
        \end{align}
      \end{minipage}
    \end{equation}
%\end{center}
\end{widetext}
where $\Delta_i$ is an $r_i \times r_i$ matrix for each $i \in [N]$. The optimization variables in this formulation are the $N$ matrices denoted by $\Delta_i$. Our proof strategy will be to use Slater's theorem, which states that if the primal problem is convex and strict feasibility holds, then the duality gap is zero and the primal and dual optimal values are equal \cite{boyd2004convex}. We will first assume primal and dual optimal variables for the ensemble SDPs. Then we will use these variables to construct primal and dual feasible solutions for the sequence SDP. Finally, we will show that these variables make the primal and dual value equal. Applying Slater's theorem to the sequence SDP, we will conclude that these must be the optimal solutions.
\begin{proof} 
    Consider the ensemble \begin{equation}
        \E^i = \{(\eta^i_1,\rho^i_1),\ldots,(\eta^i_{\l_i},\rho^i_{\l_i})\} \end{equation}
    where $\rho_j^i$ are $d_i\times d_i$ density operators and $j \in [\l_i]$. Let $S^i_j=S(\E^i)\setminus\{\rho^i_j\}$ for some $j \in [\l_i]$ and $\widetilde{S^i_j}$ be the intersection of the kernels of all the density matrices of $\E^i$, except for $\rho^i_j$. That is, $\widetilde{S^i_j}=\cap_{t=1,t\neq j}^{\l_i}\K_t^i$, where $\K_t^i$ is the kernel of $\rho_t^i$. Let $\Theta_j^i$ be a $d_i\times r_j^i$ matrix whose columns form an arbitrary orthonormal basis for $\widetilde{S^i_j}$ (which is of dimension $r_j^i$). The optimal probability of unambiguous discrimination of the elements of $\E^i$ is denoted by $p(\E^i)$ and let $\Delta_j^i$ be the $r_j^i\times r_j^i$ matrices that achieve this optimum. In addition to being positive, these matrices satisfy $\sum_{j=1}^{\l_i}\Theta_j^i\Delta_j^i{\Theta_j^i}^\dagger \preceq \I$. (See the Appendix for the SDPs of the individual ensembles and the sequence.) We denote the optimal dual variable for ensemble $\E^i$ by $Z_i$, which is also a positive matrix and satisfies ${\Theta_j^i}^\dagger\left(Z_i-\eta_j^i \rho_j^i\right) \Theta_j^i \succeq 0$. The optimal primal and dual variables satisfy \begin{equation}
         \sum_{j=1}^{\l_i}\eta_j^i\tr \left(\rho_j^i\Theta_j^i\Delta_j^i{\Theta_j^i}^\dagger\right)=\tr(Z_i)=p(\E^i).
    \end{equation}
    Now, we turn to the sequence ensemble given by Eq. \eqref{eq:seqens} and consider the SDP of its unambiguous discrimination. Note that there are $\l$ number of sequence states in this ensemble, and a state is labeled by the index $(x_1,\ldots,x_k)$, which is a $k$-tuple of indices where $x_i \in [\l_i]$ for all $i \in [k]$. Let us denote by $\widetilde{S}(x_1,\ldots,x_k)$ the intersection of the kernels of all the states of $\E_{k}$ except for $\rho^1_{x_1} \otimes \cdots \otimes \rho^k_{x_k}$. That is, $$\widetilde{S}(x_1,\ldots,x_k)=\bigcap_{\substack{(y_1, \ldots, y_k) \in [\ell_1] \times \cdots \times [\ell_k] \\ (y_1, \ldots, y_k) \neq (x_1, \ldots, x_k)}} \K(y_1,\ldots,y_k)$$ where $\K(y_1,\ldots,y_k)$ is the kernel of $\rho^1_{y_1}\otimes \cdots \otimes \rho^k_{y_k}$. This subspace has a decomposition (see Theorem 9 in the Appendix) as follows \begin{align} \label{eq:seq-ker}
    &\widetilde{S}(x_1,\ldots,x_k)=\widetilde{S_{x_1}^1}\otimes\cdots\otimes \widetilde{S_{x_k}^k} \nonumber \\ &=(\cap_{j=1,j\neq x_1}^{\l_1}\K_j^1)\otimes\cdots\otimes(\cap_{j=1,j\neq x_k}^{\l_k}\K_j^k). \end{align} We first give a prescription to construct matrices $\Theta(x_1,...,x_k)$ whose columns form an orthonormal basis of $\widetilde{S}(x_1,\ldots,x_k)$. From Eq. \eqref{eq:seq-ker}, it can be seen that these matrices are of size $\prod_{i=1}^k d_i \times \prod_{i=1}^k r^i_{x_i} (x_i \in [\l_i])$ and a simple procedure to construct them is to take $\Theta(x_1,...,x_k)=\Theta_{x_1}^{1}\otimes \cdots \otimes\Theta_{x_k}^{k}$. For a feasible primal variable $\Delta(x_1,...,x_k)$, we take the tensor product of the optimal primal variables, $\Delta(x_1,...,x_k)=\Delta_{x_1}^{1}\otimes \cdots \otimes\Delta_{x_k}^{k}$. This operator is positive since it is the tensor product of positive operators. Also, the defined operators satisfy (using Lemma \ref{lem:pos-matrices-eigs}) 
    \begin{equation}
        \begin{aligned}
        & \sum_{(x_1,...,x_k)} \Theta(x_1,...,x_k)\Delta(x_1,...,x_k)\Theta(x_1,...,x_k)^\dagger \\
         =&\sum_{(x_1,...,x_k)}(\Theta_{x_1}^1\Delta_{x_1}^1{\Theta_{x_1}^1}^\dagger)\otimes\cdots\otimes(\Theta_{x_k}^k\Delta_{x_k}^k{\Theta_{x_k}^k}^\dagger)\\
         \preceq & \; \I.
         \end{aligned}
    \end{equation}
Now consider the dual variable $Z=Z_1\otimes\cdots\otimes Z_k$, which is positive due to the positivity of the operators $Z_i$ for all $i \in [k]$. We also know that these variables satisfy ${\Theta_{x_i}^i}^\dagger Z_i \Theta_{x_i}^i\succeq {\Theta_{x_i}^i}^\dagger(\eta_{x_i}^i\rho_{x_i}^i)\Theta_{x_i}^i$ for all $x_i \in [\l_i]$ and $i \in [k]$. By taking tensor products for the indices $(x_1,...,x_k)$ we get, (using Lemma \ref{lem:pos-matrices-eigs}) \begin{widetext}
    \begin{equation}
        \Theta(x_1,...,x_k)^\dagger(Z-\eta_{x_1} \cdots \eta_{x_k}\rho_{x_1}\otimes\cdots\otimes\rho_{x_k})\Theta(x_1,...,x_k) \succeq 0.
    \end{equation} \end{widetext}
This shows that the primal and dual variables we introduced are feasible, and they both make the respective objective value equal to $\prod_i p(\E^i)$ by the trace property of the tensor product. Therefore, $p(\E_{k})$ must be equal to $\prod_i p(\E^i)$.
\end{proof}

\section{Conclusions}
Sequence discrimination problems arise in various quantum information processing tasks, where the objective is to determine the state of a sequence of finite length. The main question here is whether an optimal measurement is local or collective, i.e., whether it suffices to measure the individual members of the sequence or not. In this paper, we showed that as long as the members of the sequences are independently drawn from the same ensemble or even from different ensembles, the optimal measurement in both minimum-error and unambiguous discrimination paradigms is local and comprises fixed measurements on the individual components. It follows that if we give up the condition of independent selection of the members of a sequence, the optimal measurement could be either local and fixed, local and adaptive \cite{acin2005multiple}, or collective (partially or wholly) \cite{calsamiglia2010local}.

Our result also shows that independent sequences do not exhibit nonlocality in the sense some other unentangled systems do, as collective measurement is not necessary for optimal extraction of information from such sequences. However, in the context of quantum state exclusion, one may consider them nonlocal, for an entangled measurement is necessary, as demonstrated by the Pusey–Barrett–Rudolph result \cite{pusey2012reality} and its generalizations \cite{crickmore2020unambiguous}.

A significant part of our proof relied on solving the SDP formulation of the problem using the strong duality theorem \cite{boyd2004convex}. While SDP has been successfully used to solve various problems in quantum information theory \cite{skrzypczyk2023semidefinite,mironowicz2024semi}, we believe our approach could help to solve problems in many-body unentangled systems.

Our result could be immediately applied to settings of quantum key distribution protocols \cite{Bennett_2014, bennett1992quantum} or similar ones. In these protocols, Alice sends a sequence of quantum states, selected independently from a known ensemble, to Bob, who measures each state individually. However, if Bob can store the incoming states in quantum memory, he may think about performing a collective measurement on the entire sequence to extract more information. Our result rules out this possibility, as these sequences are i.i.d.

A natural direction for future research involves doing away with the independence assumption. Can there be sequences of dependent quantum states whose optimal discrimination is still achievable by fixed, local measurements? A systematic study involving the set of states, the nature of the joint probability distribution for the states of a sequence, and the nature of optimal measurement to discriminate them will be interesting to investigate.

\begin{acknowledgments}
    The authors would like to thank Snehasish Roy Chowdhury and Jamie Sikora for fruitful discussions.
\end{acknowledgments}

\appendix

\section{Proof of Theorem \ref{thm:subspace-decomp}}
\begin{theorem} \label{thm:subspace-decomp}
    Let $k \in \mathbb{N}$, and for each $i \in [k]$, let $S(\E^i) = \{\rho_1^i, \ldots, \rho_{\ell_i}^i\}$ be a set of $\ell_i$ quantum states which are density operators on $\mathbb{C}^{d_i}$ for some integer $d_i$. Define $S(\E_{k}) = \{\rho_{x_1}^1 \otimes \cdots \otimes \rho_{x_k}^k : x_i \in [\ell_i], \ \forall i \in [k]\}$ as the set of all $k$-length tensor product sequences where the $i$-th component comes from $S(\E^i)$. 
    
    For each $i \in [k]$ and $j \in [\ell_i]$, let $\widetilde{S^i_j}$ be defined as
    \begin{equation}
        \widetilde{S^i_j} = \bigcap_{m \in [\ell_i], m \neq j} \ker(\rho^i_m).   
    \end{equation}
    For $(x_1, \ldots, x_k) \in [\ell_1] \times \cdots \times [\ell_k]$, let $\widetilde{S}(x_1,\ldots,x_k)$ be defined as
    \begin{equation}
        \widetilde{S}(x_1,\ldots,x_k) = \bigcap_{\substack{(y_1, \ldots, y_k) \in [\ell_1] \times \cdots \times [\ell_k] \\ (y_1, \ldots, y_k) \neq (x_1, \ldots, x_k)}} \ker(\rho^1_{y_1} \otimes \cdots \otimes \rho^k_{y_k}).
    \end{equation}
    Then, for all $(x_1, \ldots, x_k) \in [\ell_1] \times \cdots \times [\ell_k]$, it holds that
    \begin{equation}\label{eq:tensor-product-sequence-decomp}
        \widetilde{S}(x_1,\ldots,x_k) = \widetilde{S^1_{x_1}} \otimes \cdots \otimes \widetilde{S^k_{x_k}}.
    \end{equation}
\end{theorem}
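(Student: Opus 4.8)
The plan is to pass from kernels to supports, where tensor products behave multiplicatively, and then dualise. Throughout I would use the standing assumption of Section \ref{sec:unamb_thm} that for each $i$ the eigenvectors of the states of $\E^i$ corresponding to nonzero eigenvalues span $\mathbb{C}^{d_i}$; writing $W^i_m:=\supp(\rho^i_m)=\range(\rho^i_m)$, this reads $\sum_{m}W^i_m=\mathbb{C}^{d_i}$. The two elementary facts I would establish first, both by choosing orthonormal eigenbases and recording which product basis vectors survive, are: (a) for positive semidefinite operators $\supp(\rho^1_{y_1}\otimes\cdots\otimes\rho^k_{y_k})=W^1_{y_1}\otimes\cdots\otimes W^k_{y_k}$, so that, since $\ker(\cdot)=\supp(\cdot)^\perp$ for Hermitian operators, $\ker(\rho^1_{y_1}\otimes\cdots\otimes\rho^k_{y_k})=(W^1_{y_1}\otimes\cdots\otimes W^k_{y_k})^\perp$; and (b) the tensor orthocomplement identity $(V_1\otimes\cdots\otimes V_k)^\perp=\sum_{i=1}^k \mathbb{C}^{d_1}\otimes\cdots\otimes V_i^\perp\otimes\cdots\otimes\mathbb{C}^{d_k}$, which follows from the splittings $\mathbb{C}^{d_i}=V_i\oplus V_i^\perp$ and the observation that $V_1\otimes\cdots\otimes V_k$ is exactly the one summand of the resulting $2^k$-fold orthogonal decomposition in which every factor is ``in.''

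With (a) in hand I would rewrite both sides of \eqref{eq:tensor-product-sequence-decomp} as orthocomplements. Using $\bigcap_\alpha V_\alpha^\perp=(\sum_\alpha V_\alpha)^\perp$,
\begin{equation}
\widetilde{S}(x_1,\ldots,x_k)=\Big(\sum_{(y_1,\ldots,y_k)\neq(x_1,\ldots,x_k)} W^1_{y_1}\otimes\cdots\otimes W^k_{y_k}\Big)^{\!\perp},
\end{equation}
and likewise $\widetilde{S^i_{x_i}}=\big(\sum_{m\neq x_i}W^i_m\big)^\perp=:(U^i_{x_i})^\perp$. Hence the claim is equivalent, after taking one further orthocomplement (an involution on subspaces), to the purely linear-algebraic identity
\begin{equation}\label{eq:plan-dual}
\sum_{(y)\neq(x)} W^1_{y_1}\otimes\cdots\otimes W^k_{y_k}=\big((U^1_{x_1})^\perp\otimes\cdots\otimes(U^k_{x_k})^\perp\big)^{\!\perp}.
\end{equation}

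To finish I would evaluate the right-hand side of \eqref{eq:plan-dual} with (b), taking $V_i=(U^i_{x_i})^\perp$ so that $V_i^\perp=U^i_{x_i}$; this yields $\sum_{i=1}^k \mathbb{C}^{d_1}\otimes\cdots\otimes U^i_{x_i}\otimes\cdots\otimes\mathbb{C}^{d_k}$. Now I would invoke the spanning assumption to write each full factor as $\mathbb{C}^{d_j}=\sum_m W^j_m$ and expand by distributivity of $\otimes$ over $+$ (this is precisely Lemma \ref{lem:lemma2} applied to spanning sets of the $W^j_m$): the $i$-th summand becomes $\sum_{(y):\,y_i\neq x_i} W^1_{y_1}\otimes\cdots\otimes W^k_{y_k}$. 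Summing over $i$ and using the combinatorial identity $\bigcup_{i=1}^k\{(y):y_i\neq x_i\}=\{(y):(y)\neq(x)\}$ collapses the double sum to $\sum_{(y)\neq(x)} W^1_{y_1}\otimes\cdots\otimes W^k_{y_k}$, which is the left-hand side of \eqref{eq:plan-dual}, completing the proof.

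The easy inclusion $\widetilde{S^1_{x_1}}\otimes\cdots\otimes\widetilde{S^k_{x_k}}\subseteq\widetilde{S}(x)$ in fact needs no spanning hypothesis: a product vector $v_1\otimes\cdots\otimes v_k$ with $v_i\in\bigcap_{m\neq x_i}\ker(\rho^i_m)$ is annihilated by every $\rho^1_{y_1}\otimes\cdots\otimes\rho^k_{y_k}$ with $(y)\neq(x)$, since such a tuple has some $y_i\neq x_i$ and hence $\rho^i_{y_i}v_i=0$. The real content, and the main obstacle, is the reverse inclusion, which is exactly where the spanning assumption is indispensable: the example $k=2$, $\ell_2=1$ with $\rho^2_1$ rank-deficient already fails the identity, since there $\widetilde{S}(1,1)=\ker(\rho^1_2\otimes\rho^2_1)$ strictly contains $\ker(\rho^1_2)\otimes\mathbb{C}^{d_2}=\widetilde{S^1_1}\otimes\widetilde{S^2_1}$. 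Routing the reverse inclusion through the orthocomplement computation \eqref{eq:plan-dual}, rather than chasing vectors directly, is what keeps the argument short and transparent.
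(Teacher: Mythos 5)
Your proof is correct, and it takes a genuinely different route from the paper's. The paper works primarily with kernels: it expands $\widetilde{S}(x_1,\ldots,x_k)$ via the identity $\ker(\varphi_1 \otimes \cdots \otimes \varphi_k) = \sum_i \V_1 \otimes \cdots \otimes \ker(\varphi_i) \otimes \cdots \otimes \V_k$, proves the inclusion $\supseteq$ directly, and then handles the hard inclusion $\subseteq$ by dualising only that direction and chasing pure tensors: a vector in each summand $\mathcal{L}^1_{j_1}\otimes\cdots\otimes\mathcal{L}^k_{j_k}$ (each factor a kernel or its complement, at least one a complement) is decomposed by writing every kernel component as $\sum_{i_b}\rho^b_{i_b}(v^b_{i_b})$ via the spanning assumption, landing it in $\sum_{(y)\neq(x)}\ima(\rho^1_{y_1}\otimes\cdots\otimes\rho^k_{y_k})$. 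You instead convert to supports at the outset using $\ker = \supp^\perp$ and multiplicativity of supports under tensor products, dualise the \emph{entire} identity via de Morgan, and dispatch the resulting primal identity in one computation: your fact (b) — which is the compact $k$-term form of the paper's $2^n-1$-term expansion of $(\V_1\otimes\cdots\otimes\V_n)^\perp$ — followed by distributivity of $\otimes$ over subspace sums and the combinatorial collapse $\bigcup_i\{(y): y_i \neq x_i\} = \{(y):(y)\neq(x)\}$. This buys both inclusions simultaneously, eliminates the case analysis over which factors are kernels versus complements and the associated bookkeeping, and localizes the spanning hypothesis to the single substitution $\mathbb{C}^{d_j} = \sum_m \supp(\rho^j_m)$. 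Your closing observation is also apt: the theorem as stated in the appendix does not include the spanning hypothesis (it is a standing "without loss of generality" assumption made in the main text before the SDP formulation), and your rank-deficient counterexample with $\ell_2 = 1$ correctly shows the identity fails without it, so the paper's statement is only correct read together with that assumption — something the paper's own proof uses but does not flag in the theorem statement.
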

\begin{proof}
    We expand both sides of Eq.~\eqref{eq:tensor-product-sequence-decomp} in terms of the kernels. The right side can be written as
    \begin{equation}
        \begin{aligned}
            \widetilde{S^1_{x_1}} \otimes \cdots \otimes \widetilde{S^k_{x_k}} &= 
            \left(\bigcap_{j=1,j\neq x_1}^{\ell_1}\ker(\rho^1_j)\right) \otimes \cdots \otimes \left(\bigcap_{j=1,j\neq x_k}^{\ell_k}\ker(\rho^k_j)\right) \\
            &= \bigcap_{j_i\neq x_i,\forall i}\left(\ker(\rho^1_{j_1}) \otimes \cdots \otimes \ker(\rho^k_{j_k})\right).
        \end{aligned}        
    \end{equation}
    The second equality follows from the fact that if $\{\V^i_j\}_{j=1}^{\ell_i}$ are subspaces of $\V^i$ for $i\in [k]$, then it holds that
    \begin{equation}
        \left(\bigcap_{j=1}^{\ell_1}\V^1_j\right) \otimes \cdots \otimes \left(\bigcap_{j=1}^{\ell_k}\V^k_j\right) = \bigcap_{(j_1,\ldots,j_k)\in [\ell_1] \times \cdots \times [\ell_k]}\V^1_{j_1} \otimes \cdots \otimes \V^k_{j_k}.        
    \end{equation}
    The left side of Eq.~\eqref{eq:tensor-product-sequence-decomp} can be expanded as follows:
    \begin{equation}
        \begin{aligned}
            \widetilde{S}(x_1,\ldots,x_k) &= \bigcap_{\substack{(y_1,\ldots,y_k)=(1,\ldots,1) \\ (y_1,\ldots,y_k) \neq (x_1,\ldots,x_k)}}^{(\ell_1,\ldots,\ell_k)}\ker(\rho^1_{y_1} \otimes \cdots \otimes \rho^k_{y_k}) \\
            &= \bigcap_{\substack{(y_1,\ldots,y_k)=(1,\ldots,1) \\ (y_1,\ldots,y_k) \neq (x_1,\ldots,x_k)}}^{(\ell_1,\ldots,\ell_k)}\left(\ker(\rho^1_{y_1}) \otimes \mathbb{C}^{d_2} \otimes \cdots \otimes \mathbb{C}^{d_k} + \cdots + \mathbb{C}^{d_1} \otimes \cdots \otimes \mathbb{C}^{d_{k-1}} \otimes \ker(\rho^k_{y_k})\right).
        \end{aligned}
    \end{equation}
    The second equality follows from the fact that if $\varphi_i$ is a linear operator on vector space $\V_i$ for $i \in [k]$ and its kernel is denoted by $\ker(\varphi_i)$, then the following relation holds:
    \begin{equation}
        \ker(\varphi_1 \otimes \cdots \otimes \varphi_k) = \sum_{i=1}^k \V_1 \otimes \cdots \otimes \ker(\varphi_i) \otimes \cdots \otimes \V_k.
    \end{equation}
    Now for each $k$-tuple $(y_1,\ldots,y_k) \neq (x_1,\ldots,x_k)$, consider $\ker(\rho^1_{y_1}) \otimes \mathbb{C}^{d_2} \otimes \cdots \otimes \mathbb{C}^{d_k} + \cdots + \mathbb{C}^{d_1} \otimes \cdots \otimes \mathbb{C}^{d_{k-1}} \otimes \ker(\rho^k_{y_k})$. This sum contains $\ker(\rho^1_{y_1}) \otimes \cdots \otimes \ker(\rho^k_{y_k})$ and thus contains $\bigcap \left(\ker(\rho^1_{j_1}) \otimes \cdots \otimes \ker(\rho^k_{j_k})\right)$ where $j_i \neq x_i$ for all $i\in[k]$. Therefore, we conclude that
    \begin{equation}
        \widetilde{S}(x_1,\ldots,x_k) \supseteq \bigcap_{j_i\neq x_i,\forall i}\left(\ker(\rho^1_{j_1}) \otimes \cdots \otimes \ker(\rho^k_{j_k})\right).
    \end{equation}
    
    For the converse, we will show that the following relation holds:
    \begin{equation}\label{eq:converse_inclusion}
        \left(\bigcap_{j_i\neq x_i,\forall i}\left(\ker(\rho^1_{j_1}) \otimes \cdots \otimes \ker(\rho^k_{j_k})\right)\right)^{\perp} \subseteq \left(\widetilde{S}(x_1,\ldots,x_k)\right)^{\perp},
    \end{equation}
    where, for a subspace $\V$, $\V^\perp$ denotes its orthogonal complement. First, we express the right side of the above relation as follows:
    \begin{equation}
        \begin{aligned}
            \left(\widetilde{S}(x_1,\ldots,x_k)\right)^{\perp}
           &= \left(\bigcap_{\substack{(y_1,\ldots,y_k) \\ (y_1,\ldots,y_k) \neq (x_1,\ldots,x_k)}} \ker(\rho^1_{y_1} \otimes \cdots \otimes \rho^k_{y_k})\right)^{\perp} \\
           &= \sum_{\substack{(y_1,\ldots,y_k) \\ (y_1,\ldots,y_k) \neq (x_1,\ldots,x_k)}} \ker(\rho^1_{y_1} \otimes \cdots \otimes \rho^k_{y_k})^{\perp} \\
           &= \sum_{\substack{(y_1,\ldots,y_k) \\ (y_1,\ldots,y_k) \neq (x_1,\ldots,x_k)}} \ima(\rho^1_{y_1}\otimes\cdots\otimes \rho^k_{y_k})
        \end{aligned}    
    \end{equation}
    where $\ima(\rho)$ denotes the image of an operator $\rho$. In the second equality, we have used the fact that the orthogonal complement of the intersection of subspaces is equal to the sum of the orthogonal complement of individual subspaces: $(\V_1 \cap \cdots \cap \V_n)^\perp=\V_1^\perp + \cdots + \V_n^\perp$, where $\V_i$'s are subspaces. This identity also allows us to write the left side of Eq.~\eqref{eq:converse_inclusion} as a sum of subspaces,
    \begin{equation}\label{eq:converse_inclusion_2}
        \left(\bigcap_{j_i\neq x_i,\forall i}\left(\ker(\rho^1_{j_1}) \otimes \cdots \otimes \ker(\rho^k_{j_k})\right)\right)^{\perp}= \sum_{\substack{(j_1,\ldots,j_k)\\ j_i\neq x_i, \forall i}} \left(\ker(\rho^1_{j_1}) \otimes \cdots \otimes \ker(\rho^k_{j_k})\right)^{\perp}.
    \end{equation}
    
    This can be further expanded by noting that for subspaces $\V_1, \ldots, \V_n$, we have
    \begin{equation}
        \begin{aligned}
            (\V_1 \otimes \cdots \otimes \V_n)^\perp = &~ \V_1^\perp \otimes \cdots \otimes \V_n + \cdots + \V_1 \otimes \cdots \otimes \V_n^\perp \\
            + &~ \V_1^\perp \otimes \V_2^\perp \otimes \cdots \otimes \V_n + \cdots + \V_1 \otimes \cdots \V_{n-1}^\perp \otimes \V_n^\perp \\ 
            &~ \vdots \\
            + &~ \V_1^\perp \otimes \cdots \otimes \V_n^\perp.
        \end{aligned}        
    \end{equation}
Each term in the above sum is of the form $\W_1\otimes\cdots\otimes \W_n$, where $\W_i$ is either $\V_i$ or $\V_i^\perp$, and there is at least one index $a \in [n]$ for which $\W_a = \V_a^\perp$. Therefore, each term on the right side of Eq.~\eqref{eq:converse_inclusion_2} is a sum of terms such as $(\mathcal{L}^1_{j_1}\otimes\cdots\otimes \mathcal{L}^k_{j_k})$. That is, 
    \begin{equation}(\ker(\rho^1_{j_1})\otimes\cdots\otimes\ker(\rho^k_{j_k}))^{\perp} = \mathcal{L}^1_{j_1}\otimes\cdots\otimes \mathcal{L}^k_{j_k}    
    \end{equation} where $\mathcal{L}^i_{j_i}=\ker(\rho^i_{j_i})$ or $\mathcal{L}^i_{j_i}=(\ker(\rho^i_{j_i}))^{\perp}$ and there is at least one index $a \in [k]$ for which $\mathcal{L}^a_{j_a}=(\ker(\rho^a_{j_a}))^{\perp}$. 
    
    Let us now consider a pure tensor $u_1\otimes\cdots\otimes u_k\in \mathcal{L}^1_{j_1}\otimes\cdots\otimes\mathcal{L}^k_{j_k}$, where $u_a\in (\ker(\rho^a_{j_a}))^{\perp}=\ima(\rho^a_{j_a})$ for some $a \in [k]$ and $j_a\neq x_a$. This means that $u_a=\rho^a_{j_a}(v^a_{j_a})$ for some $v^a_{j_a}\in \mathbb{C}^{d_a}$. For some other index $b$, where $\mathcal{L}^b_{j_b}=\ker(\rho^b_{j_b})$, $u_b$ can be written as $\sum_{i_b=1}^{\ell_b}\rho^b_{i_b}(v^b_{i_b})$ where $v^b_{i_b} \in \mathbb{C}^{d_b}$ for all $i_b$. This follows from the assumption that $\supp(A^i)$ (the span of all the eigenvectors of the states of $S^i$ that correspond to nonzero eigenvalues) spans $\mathbb{C}^{d_i}$ for all $i$. Therefore, this pure tensor can be written as
    \begin{equation}
        \begin{aligned}
            &~ u_1\otimes\cdots\otimes u_k \\
           =& \sum_{i=1}^{\ell_1}\rho^1_i(v^1_i) \otimes \cdots \otimes \rho^a_{j_a}(v^a_{j_a})\otimes\cdots\otimes \sum_{i=1}^{\ell_k}\rho^k_i(v^k_i)
        \end{aligned}
    \end{equation}
    where the sum appears in those places whose corresponding $\mathcal{L}$ equals the kernel. This can be written as
    \begin{equation}
        \begin{aligned}
            & \sum \; \rho^1_{j_1}(v^1_{j_1})\otimes\cdots\otimes\rho^a_{j_a}(v^a_{j_a})\otimes\cdots\otimes \rho^k_{j_k}(v^k_{j_k})\\
            =& \sum \; (\rho^1_{j_1}\otimes\cdots\otimes \rho^a_{j_a}\otimes \cdots \otimes \rho^k_{j_k})(v^1_{j_1}\otimes\cdots\otimes v^a_{j_a}\otimes\cdots v^k_{j_k})
        \end{aligned}
    \end{equation}
    where the sum is over those indices whose corresponding $\mathcal{L}$ is equal to the kernel.
    
    Therefore, the pure tensor, which was assumed to be an element of $\mathcal{L}^1_{j_1}\otimes\cdots\otimes\mathcal{L}^k_{j_k}$, is shown to belong to
    \begin{equation}
        \sum_{\substack{(y_1,\ldots,y_k) \\ (y_1,\ldots,y_k) \neq (x_1,\ldots,x_k)}} \ima(\rho^1_{y_1}\otimes\cdots\otimes \rho^k_{y_k})
    \end{equation} 
    since there is at least one index $a \in [k]$ that satisfies $j_a\neq x_a$.
    
    Now observe that $\mathcal{L}^1_{j_1}\otimes\cdots\otimes\mathcal{L}^k_{j_k}$ is spanned by pure tensors and, since 
    \begin{equation}
        \left(\bigcap_{j_i\neq x_i,\forall i}\left(\ker(\rho^1_{j_1}) \otimes \cdots \otimes \ker(\rho^k_{j_k})\right)\right)^{\perp}    
    \end{equation}
    is the sum of spaces of the form $\mathcal{L}^1_{j_1}\otimes\cdots\otimes\mathcal{L}^k_{j_k}$, it is spanned by pure tensors as well. The span of these pure tensors form a subspace in 
    \begin{equation}
        \sum_{\substack{(y_1,\ldots,y_k) \\ (y_1,\ldots,y_k) \neq (x_1,\ldots,x_k)}} \ima(\rho^1_{y_1}\otimes\cdots\otimes \rho^k_{y_k})
    \end{equation} 
    and any element of 
    \begin{equation}
        \left(\bigcap_{j_i\neq x_i,\forall i}\left(\ker(\rho^1_{j_1}) \otimes \cdots \otimes \ker(\rho^k_{j_k})\right)\right)^{\perp}
    \end{equation}
    also belongs to this subspace and hence to \begin{equation}
        \sum_{\substack{(y_1,\ldots,y_k) \\ (y_1,\ldots,y_k) \neq (x_1,\ldots,x_k)}} \ima(\rho^1_{y_1}\otimes\cdots\otimes \rho^k_{y_k}) = \left(\widetilde{S}(x_1,\ldots,x_k)\right)^{\perp}.
    \end{equation} 
    Therefore, the converse is established.
\end{proof}

\section{Semidefinite program of unambiguous discrimination of sequences}
The SDPs of the ensembles and the sequences follow straightforwardly from the SDP given in Eq. \eqref{eq:unamb-sdp} in the main text. However, for the reader's convenience, we write them here explicitly.

 The SDP for the optimal probability of unambiguous discrimination of the states of ensemble $\E^i$ is the following: 
\begin{center} 
  \begin{equation} \label{eq:ens-unamb-sdp}
      \begin{minipage}{2.6in}
          \centerline{\underline{Primal problem}}\vspace{-5mm}
          \begin{align*}
              \text{maximize:}\quad & \sum_{j=1}^{\l_i}\eta_j^i\tr(\rho_j^i\Theta_j^i\Delta_j^i{\Theta_j^i}^\dagger) \nonumber \\
    \text{subject to:}\quad & \sum_{j=1}^{\l_i}\Theta_j^i\Delta_j^i\Theta_j^{i \dagger} \preceq \I, \\ 
    & \Delta_j^i \succeq 0, \quad \forall j \in [\l_i] \nonumber
          \end{align*}
      \end{minipage}
    \hspace*{13mm}
      \begin{minipage}{2.6in}
        \centerline{\underline{Dual problem}}\vspace{-5mm}
        \begin{align}
          \text{minimize:}\quad & \tr(Z_i) \nonumber \\ 
          \text{subject to:}\quad & {\Theta_j^i}^\dagger\left(Z_i-\eta_j^i \rho_j^i\right) \Theta_j^i \succeq 0 \quad \forall j \in [\l_i], \nonumber  \\ 
            & Z_i \succeq 0. \nonumber 
        \end{align}
      \end{minipage}
    \end{equation}
\end{center}
Here, $\Theta_j^i$ is an $d_i\times r_j^i$ matrix whose columns form an arbitrary orthonormal basis for $S_j^i$ (of dimension $r_j^i$).

The ensemble $\E_k$ consists of length $k$ sequences of quantum states that are chosen independently, 
\begin{equation*}
    \E_k = \{(\eta^1_{x_1}\cdots \eta^k_{x_k},\rho^1_{x_1} \otimes \dots \otimes \rho^k_{x_k}): ~x_i \in [\l_i] ~\forall i \in [k]\}.
\end{equation*} The SDP for its optimal unambiguous discrimination is the following:\\\\
\centerline{\underline{Primal problem}} \begin{align} \label{eq:seq-unamb-sdp}
\text{maximize:}\quad & \sum_{x_1=1}^{\l_1}\cdots \sum_{x_k=1}^{\l_k}\eta_{x_1}^1 \cdots \eta_{x_k}^k \tr(\rho^1_{x_1} \otimes \cdots \otimes \rho^k_{x_k}\Theta(x_1,...,x_k)\Delta(x_1,...,x_k){\Theta(x_1,...,x_k)}^\dagger) \nonumber \\
\text{subject to:}\quad & \sum_{x_1=1}^{\l_1}\cdots \sum_{x_k=1}^{\l_k}\Theta(x_1,...,x_k)\Delta(x_1,...,x_k){\Theta(x_1,...,x_k)}^\dagger \preceq \I, \\ 
& \Delta(x_1,...,x_k) \succeq 0, \quad \forall x_i \in [\l_i] ~\forall i \in [k]. \nonumber     \end{align} \\

\centerline{\underline{Dual problem}} \begin{align} %\label{eq:seq-unamb-sdp-dual}
\text{minimize:}\quad & \tr(Z) \nonumber \\
\text{subject to:}\quad & \Theta(x_1,...,x_k)^\dagger (Z-\eta_{x_1}\cdots \eta_{x_k}\rho_{x_1}\otimes\cdots\otimes\rho_{x_k})\Theta(x_1,...,x_k) \succeq 0 ~\forall x_i \in [\l_i] ~\forall i \in [k], \\ 
& Z \succeq 0. \nonumber     
\end{align}
Here, the matrices $\Theta(x_1,...,x_k)$ and $\Delta(x_1,...,x_k)$ are $\l$ in number, with one for each $k$-tuple $(x_1,...,x_k)$.

\end{document}